\documentclass[sigconf]{aamas} 

\usepackage{balance} 

\usepackage{graphicx} 
\usepackage{hyperref}
\usepackage{caption}
\usepackage[ruled,vlined,linesnumbered]{algorithm2e}
\usepackage{xcolor}
\usepackage{subcaption}
\usepackage{natbib}
\usepackage{multirow}
\usepackage{footnote}
\usepackage{amsmath}
\usepackage{amsthm}
\usepackage{booktabs}
\usepackage{xspace}
\usepackage{tabularx}

\renewcommand{\cite}{\citep}

\newcommand{\cppp}{\textsc {cppp}\xspace}

\newcommand{\mapf}{\textsc {MAPF}\xspace}
\newcommand{\kppmapf}{\textsc {kPPMAPF}\xspace}
\newcommand{\kpp}{\textsc {kPP}\xspace}
\newcommand{\eppmapf}{\textsc {ePPMAPF}\xspace}
\newcommand{\ekppmapf}{\textsc {ekPPMAPF}\xspace}
\newcommand{\fpp}{\textsc {fPP}\xspace}
\newcommand{\ppfpp}{\textsc {PPfPP}\xspace}
\newcommand{\agentgroupi}{\textit{AgGroup}$_i$\xspace}

\newcommand{\tuple}[1]{\ensuremath{\langle #1\rangle}}

\newcommand{\commentout}[1]{}

\SetKwProg{Fn}{Function}{:}{end}
\SetKwProg{Cls}{Class}{:}{end}
\SetKw{Params}{Parameters:}
\SetKwRepeat{DoWhile}{do}{while}

\commentout{
\newtheorem{myclaim}{myclaim}

\newtheorem{theorem}{Theorem}

\newtheoremstyle{exampstyle}
  {-.25pt} 
  {-.25pt} 
  {\itshape} 
  {} 
  {\bfseries} 
  {.} 
  {.5em} 
  {} 
\theoremstyle{exampstyle}
}
\newtheorem{theorem}{Theorem}
\newtheorem{definition}{Definition}
\newtheorem{corollary}{Corollary}

\newcommand{\plan}[1]{}

\newcommand{\rotem}[1]{}
\newcommand{\roni}[1]{}
\newcommand{\guy}[1]{}

\setcopyright{ifaamas}
\acmConference[AAMAS '26]{Proc.\@ of the 25th International Conference
on Autonomous Agents and Multiagent Systems (AAMAS 2026)}{May 25 -- 29, 2026}
{Paphos, Cyprus}{C.~Amato, L.~Dennis, V.~Mascardi, J.~Thangarajah (eds.)}
\copyrightyear{2026}
\acmYear{2026}
\acmDOI{}
\acmPrice{}
\acmISBN{}

\acmSubmissionID{823}

\title[Privacy Preserving MAPF]{Privacy Preserving Multi Agent Path Finding}

\author{Rotem Lev Lehman}
\affiliation{
  \institution{Ben Gurion University of the Negev}
  \city{Be'er Sheva}
  \country{Israel}}
\email{levlerot@post.bgu.ac.il}

\author{Roni Stern}
\affiliation{
  \institution{Ben Gurion University of the Negev}
  \city{Be'er Sheva}
  \country{Israel}}
\email{sternron@bgu.ac.il}

\author{Guy Shani}
\affiliation{
  \institution{Ben Gurion University of the Negev}
  \city{Be'er Sheva}
  \country{Israel}}
\email{shanigu@bgu.ac.il}

\begin{abstract}
    In the multi-agent path finding (\mapf) problem, a group of agents search in a graph for a path for each agent where no two paths collide.  
    While in all applications of \mapf the agents must not collide with each other, in some of them the agents may not wish to share their paths due to privacy constraints.   
    In this work, we formulate two types of privacy constraints for MAPF and propose algorithms that preserve them. The first type of privacy we consider is \emph{planning-level privacy}, which means that during planning, the agents cannot identify exactly the planned location of the other agents. 
    We propose a general framework for obtaining planning-level privacy, which works by adding \emph{mock agents} to the planning process. The second type of privacy we consider is \emph{execution-level privacy}, which is relevant when agents have limited sensing capabilities. Execution-level privacy is preserved if none of the agents is allowed to sense the location of the other agents during execution. We show how to adapt two popular \mapf algorithms, namely PIBT and LaCAM, such that they preserve execution-level privacy. 
    Lastly, we propose a post-processing technique that allows the agents to reduce the sum of costs of the returned solution without losing any privacy. We also implemented our algorithms and evaluated them empirically, showing that the proposed post-processing technique indeed improved cost significantly.     
\end{abstract}

\keywords{Multi Agent Path Finding, MAPF, Privacy}

\newcommand{\BibTeX}{\rm B\kern-.05em{\sc i\kern-.025em b}\kern-.08em\TeX}

\begin{document}


\pagestyle{fancy}
\fancyhead{}


\maketitle 


\section{Introduction}

The Multi-Agent Path Finding (MAPF) problem arises when multiple mobile agents must each find a path from their respective start locations to their goal locations on a shared graph. The primary requirement is that these paths are collision-free, meaning that no two agents occupy the same location or swap locations simultaneously. MAPF is motivated by a wide range of real-world applications, such as automated warehouse robotics, airport ground traffic management, and digital entertainment, where efficient and safe coordination of multiple agents is essential.

Most work on MAPF has assumed centralized control and complete information sharing between the agents. 
This is suitable for many use cases, such as warehouse robots controlled by a single operating company. 
In this work, we consider a different setting, where the agents are controlled by different entities that must collaborate to avoid collisions and optimize a shared objective function, yet still wish to keep some information private from each other. 
A primary motivation for such a setting is \emph{compartmentalization} due to security concerns. 
For example, consider an emergency response scenario in a large city, where multiple organizations—such as fire departments, police, and private security firms—deploy their own autonomous vehicles or drones to respond to incidents. While these organizations must coordinate to avoid traffic congestion and ensure rapid response, they may not wish to reveal the exact locations, routes, or priorities of their units due to operational security or privacy regulations.
More use-cases for privacy-preserving \mapf include coordinated trucks and drones of different logistics companies, coordinating routes of human taxi drivers, and coordinating the movement of robots in shared environments.
In this context, MAPF algorithms must enable safe and efficient multi-agent coordination while preserving the confidentiality of each organization's sensitive information. 

This form of privacy-preserving collaborative multi-agent planning has been studied in the context of other types of multi-agent planning problems, including Multi-Agent STRIPS~\cite{maliah2017collaborative,lev2022reducing,nissim2014distributed,brafman2015privacy}, Distributed Constraints Optimization (DCOP)~\cite{faltings2008privacy,greenstadt2007ssdpop} and Distributed Constraint Satisfaction Problem (DisCSP)~\cite{yokoo2002secure,729707}.
In this work we explore different types of privacy requirements one may consider in the context of MAPF and how one can achieve them. 

We distinguish between two types of privacy in MAPF: \emph{planning-level privacy} and \emph{execution-level privacy}. 
Planning-level privacy means that the agents cannot infer from the information they share during planning process any location and time any of the other agents are planning to visit. Execution-level privacy means that even if the agents are equipped with some sensing capabilities, they would still be unable to infer the location of the other agents during execution. 
Formally defining these types of privacy is the first contribution of this work.  

To obtain planning-level privacy, we propose a general framework that works by adding \emph{mock agents} to the planning process. Each real agent is associated with several mock agents, each has a fictitious start and goal location. Any MAPF algorithm can then be used to plan for all agents, including both real and mock agents. We discuss the type of strong privacy one obtains with this mechanism and its weaknesses. 
Then, we show how one can obtain execution-level privacy in two popular MAPF algorithms, namely PIBT~\cite{okumura2022priority} and LaCAM$^*$~\cite{okumura2024lacam3}, given prior knowledge about the sensing capabilities of the other agents.

To preserve privacy, the real agent cannot directly manipulate the planning process to optimize its own cost over the costs of the paths of the mock agents, as this may leak information about its true path. However, we show that the cost of the real agent can be improved in a post-processing step that does not affect the privacy guaranties. In this post-processing step, we identify \emph{safe zones} in the plan, which are locations and time steps where the real agent can deviate from its planned path without risking a collision or detection with other agents. We then use these safe zones to re-plan locally for the real agent, improving its cost while maintaining the overall privacy guaranties.


Finally, we implemented our privacy-preserving algorithms and evaluated them empirically, showing the relation between the amount of privacy we aim for and overall efficiency. Also, we show when our post-processing can be very effective and when it does not add significant gains. 
Overall, this work paves the way for future research on privacy-preserving MAPF, highlighting the trade-offs between privacy, efficiency, and solution quality in MAPF. 


\section{Background and Problem Setup}

\label{sec:background}


A classical MAPF problem with \(N\) agents is defined by a tuple \(\langle G, s, t \rangle\) where \(G = (V, E)\) is an undirected graph whose vertices are the possible locations agents may occupy, and every edge \((v, u) \in E\) represents that an agent can move from \(v\) to \(u\) without passing through any other vertex.
The functions $s$ and $t$ map each agent to its initial and desired destination locations, respectively. 
Time is discretized into time steps. In every time step, each agent can either \emph{wait} in its current location or \emph{move} to a location adjacent to its current location. 
A \emph{single-agent plan} for an agent \(i\) is a sequence of actions (wait or move) that if performed starting from \(s(i)\) will end up in  \(t(i)\). 
A \emph{joint plan} is a set of single-agent plans, one for each agent. 
For a joint plan \(\Pi\), we denote by \(\Pi_i\) its constituent single-agent plan for agent \(i\), and denote by $\Pi_i(t)$ as the vertex agent $i$ is planned to occupy at timestep $t$ according to $\Pi_i$.

A pair of agents \(i\) and \(j\) have a \emph{vertex conflict} in a joint plan \(\Pi\) if, according to their respective single-agent plans \(\Pi_i\) and \(\Pi_j\), both agents are planned to occupy the same vertex at the same time. Similarly, agents have a \emph{swapping conflict} in a joint plan if they are planned to swap locations over the same edge at the same time. A \emph{valid} solution to a MAPF problem is a joint plan without any conflict. 
Several solution cost functions have been proposed for MAPF. The two most common are sum of costs (SOC) and makespan, which are the sum and max, respectively, over the lengths of the single-agent plans in the solution. 
Finding cost-optimal solution for either cost function is NP-Hard~\cite{surynek2010anOptimization, yu2013structure}, and in directed graphs even finding a solution is NP-Hard~\cite{nebel2024computational}.

\plan{MAPF solvers (relevant ones)}
Nevertheless, many algorithms have been proposed for solving classical MAPF~\cite{felner2017search,surynek2022problem}.
A \mapf algorithm is \emph{complete} if it will eventually find a valid solution if one exists, and is \emph{optimal} if the found solution minimizes the cost function.
Some \mapf planners are complete and optimal, such as CBS~\cite{sharon2015conflict}, others only complete, such as LaCAM \cite{okumura2023lacam}, and others neither complete nor optimal yet are known to be very fast, such as PIBT~\cite{okumura2022priority}.
Anytime MAPF algorithms, such as LaCAM$^*$~\cite{okumura2024lacam3}, quickly find an initial solution and then continuously improve it as long as additional runtime is available. 
In this work, we build on PIBT and LaCAM$^*$ and therefore describe them briefly below.

\plan{Describe the PIBT algorithm}

PIBT~\cite{okumura2022priority} is a sequential \emph{configuration} generator, where a configuration here is an $N$-sized vector   representing the location of each agent.  
In each time step, PIBT accepts a configuration of the agents $Q^{from}$ and generates a new configuration $Q^{to}$ corresponding to a possible valid move of all agents. 
An initial priority is set at each timestep for all of the agents based on their distance to their goal, where the closest agent that hasn't already reached its goal gains the highest priority.
In order to determine $Q^{to}$, PIBT sequentially assigns a vertex to each agent while avoiding collisions. The order of assignment depends on the priority given to each agent, but before assigning a vertex $v$ to agent $a_i$, PIBT first checks if $\exists a_j \neq a_i: Q^{from}[a_j]=v$. If such $a_j$ exists, then it must apply that $Q^{to}[a_j] \neq v$ because of the collision avoidance. If $Q^{to}[a_j]$ was not assigned yet, PIBT is run on $a_j$ to try and move it from $v$, and $a_j$ gains the priority of $a_i$ to enable it to move lower priority agents in order to clear the way for $a_i$. If it was unable to move to another location, PIBT backtracks and tries to assign the next best vertex from the neighbors of $Q^{from}[a_i]$ for $a_i$.

\plan{Describe the LaCAM* algorithm}

LaCAM$^*$~\cite{okumura2024lacam3} is a two level-search algorithm, where each node in the high level search holds the configuration and a \emph{constraint tree}, and search throughout the nodes in a depth-first manner.
The low-level search gradually grows the constraint tree, and picks the next constraint node to explore, using a breadth-first manner.
Then, after picking both high-level and low-level nodes, LaCAM$^*$ generates a new configuration using a \emph{configuration generator}. The configuration generator must satisfy the constraints from the low-level node.
The authors of LaCAM$^*$ chose to use PIBT as a configuration generator, and so we also use it as such in our work.

\plan{Describe briefly the DisCSP setup}

In this work we also build on DisCSP~\cite{729707,yokoo2002secure}, and so describe it briefly together with privacy preserving solvers.

A DisCSP is a tuple $\langle A, X, D, C, \alpha \rangle$, where
\begin{itemize}
    \item $A = \{ a_1, a_2, \dots, a_m \}$ is a finite set of autonomous agents.
    \item $X = \{ x_1, x_2, \dots, x_n \}$ is a finite set of variables.
    \item $D = \{ D_1, D_2, \dots, D_n \}$ where each $D_i$ is a finite domain of possible values for variable $x_i$.
    \item $C = \{ c_1, c_2, \dots, c_k \}$ where each constraint $c_j$ is defined over a subset of variables and specifies the set of allowed tuples for those variables. Constraints can be defined over variables owned by the same agent or by different agents.
    \item $\alpha : X \rightarrow A$. The function $\alpha$ assigns each variable to exactly one agent, meaning (1) Agent $\alpha(x_i)$ is the owner of variable $x_i$. (2) Each agent controls the value assignment of its own variables.
\end{itemize}

\plan{Describe privacy preserving solvers for DisCSP}

The work in \cite{yokoo2002secure} proposes a secure privacy-preserving DisCSP algorithm that: (1) Uses public-key cryptography and multiple cooperating servers to perform the search process on encrypted information. (2) Ensures the distributed search (similar to chronological backtracking) does not leak private information of any agent. (3) Guarantees that neither other agents nor the servers can infer additional details about private variable values beyond the final agreed solution.
In other words, agents can collaboratively solve the constraint problem and reach agreement on a solution while keeping each agent’s private information completely confidential.

\paragraph{Problem setup and assumptions}
In this work, we consider a group of agents faced with a classical MAPF problem $\tuple{G,s,t}$. 
That is, each agent $i$ is initially located in an initial location $s(i)$ and aims to reach a target location $t(i)$ while avoiding collisions with the other agents. 
The environment is static and the agents' actions are deterministic. 
Each agent is fully aware of the underlying graph $G$, yet it does not know the initial and target location of the other agents. 
To coordinate, the agents may send messages to each other directly and immediately. 
Thus, without any privacy considerations the agents could solve the problem by sending their initial and target locations to one of the agents; have that agent reconstruct the MAPF problem and use any MAPF algorithm to solve it; share the solution with all agents; and finally have all agents safely execute it. 
The fundamental challenge we focus on in this work is that each agent does not wish the agents to know where it plans to visit in every timestep on its way to its target. 
That is, for a MAPF solution $\Pi$, agent $i$, and time step $t$, the \emph{private information} agent $i$ does not wish to disclose is the location $\Pi_i(t)$. 
Next, we consider two stages where such information can be revealed or inferred --- during planning and during execution --- and propose methods to preserve privacy in these stages.

\plan{Related work on Dec. MAPF solvers}
Our setting somewhat similar to prior work on distributed methods for solving \mapf~\cite{Velagapudi_2010,pianpak2019distributed,keskin2024decentralized,ho2020decentralized,wang2020walk}. 
One such approach~\cite{keskin2024decentralized,dergachev2021distributed} has each agent plan independently, coordinating and replanning online with other agents that enter its field of view. In that scheme, and in general in existing decentralized \mapf algorithms, agents may know the partial path of other agents if they are close to each other during the runtime. Thus, they do not provide any privacy guarantees, which is our primary objective.

\section{Planning-Level Privacy} 


During planning, agents may be able to infer some knowledge about the private information of other agents by analyzing the messages sent by each agent. 
We denote by $\mu$ the set of messages passed between the agents during the planning process. 
The \emph{agents' belief} about agent $i$ at time $t$, denote  $b_i(\mu,t)$, is the set of locations agent $i$ may occupy at time $t$ according to $\mu$. 
In other words, the agents' belief about agent $i$ captures the uncertainty of the other agents regarding the true location of $i$ at timestamp $t$. 
The \emph{belief state} of the agents is the vector $b=(b_1,\ldots, b_N)$ containing all the agents' beliefs. 

\begin{definition}[k-Privacy]
    \label{def:k-privacy}
    A belief state $b$ is said to preserve \emph{k-Privacy} if for every agent $i$ and time step $t$, it holds that $|b_i^{t}(\mu)| \geq k$.    
\end{definition}
Intuitively, this means that for every agent $i$ and time step $t$, 
the messages sent during planning are not sufficient to allow the other agents to narrow down the possible location of agent $i$ to fewer than $k$ possible locations. 
Next, we consider the problem of finding a solution to a given MAPF problem while preserving $k$-privacy during the planning process. We call this problem the k-Privacy Preserving \mapf (\kppmapf) problem. 
Note that \mapf is a special case of a \kppmapf problem, where $k=1$.




\commentout{
\begin{definition}[Agents' Belief]
The \emph{agents' belief} about agent $i$ at time $t$ given the set of messages $\mu$, denoted $b_i^{t}(\mu)$, is the set of locations agent $i$ may occupy at time $t$ all the agents excluding agent $i$ 

in which $a_i$ could reside at 
time $t$, as inferred by all other agents during the planning process that 
produced $\Pi$. This set is derived from both $\mu$ and the plan $\Pi$ itself.

is a set of vertices the other agents consider agent $i$ may occupy at time $t$.

the messages sent during planning ($\mu$)
and a solution $\Pi$ 

a $\Pi$, denoted $b_i^{t}(\mu, \Pi)$, is the set of locations in which $a_i$ could reside at 
time $t$, as inferred by all other agents during the planning process that 
produced $\Pi$. This set is derived from both $\mu$ and the plan $\Pi$ itself.
Other agents cannot distinguish between the vertices in $b_i^{t}(\mu, \Pi)$, and cannot infer which of them is the real vertex that $a_i$ occupies at time $t$.

Formally, the belief state induces a uniform probability distribution over its elements:
\[
\Pr\!\bigl(a_i \text{ at } v \text{ at time } t \;\big|\; b_i^{t}(\Pi)\bigr) 
\;=\; \frac{1}{|b_i^{t}(\Pi)|}
\qquad \forall v \in b_i^{t}(\Pi).
\]
\end{definition}
}

\subsection{k-Privacy Preserving \mapf Planner}

Now we describe the k-Privacy Preserving \mapf Planner (\kpp), which is a general algorithm for solving the \kppmapf problem. 
\kpp works by having each agent $i$ create a set of $k-1$ \emph{mock agents}, each associated with a unique pair of initial and target locations. 
Then, the agents collaboratively solve a larger MAPF problem that includes non-conflicting single-agent plans for both real and mock agents. Each (real) agent then follows the single-agent plan created for it, discarding the plans created for the mock agents. 
Next, we describe \kpp in more details.

In \kpp, the agents start by designating randomly one of the agents to be the \emph{planning agent}. 
Then, each agent $i$ performs the steps described in 
Algorithm~\ref{alg:kPP}. 
First, it creates a set of $k$ pairs of initial and target locations, denoted \agentgroupi. 
This set includes the pair $(s(i),t(i))$, i.e., the agent's real initial and target locations, as well as $k-1$ additional pairs. 
These pairs must be unique, i.e., different from $(s(i),t(i))$, from each other, and from the pairs used by the other (real) agents. 
Then, the agent shuffles these pairs randomly and shares them with all other agents (line~\ref{line:shuffle-broadcast} in Alg.~\ref{alg:kPP}). 
Next, agent $i$ waits until all agents have broadcast will receive the agents groups of all agents. 
At this point, if $i$ is the designated planning agent, then it creates a MAPF problem with the initial and target locations in all agent groups (a total of $k\cdot N$ ``agents''). Then, it calls any off-the-shelf MAPF solver to solve this MAPF problem, and broadcasts the solution to all the agents. 
If $i$ is not the planning agent, then it waits for the solution to be broadcasted by the planning agent. 
Finally, agent $i$ extracts from the solution created by the planning agent only the single-agent plan starting from $s(i)$ and ending in $t(i)$  (line~\ref{line:extract-solution}).



\begin{algorithm}[b!]
		\caption{\kpp for agent $i$}
            \label{alg:kPP}
        \agentgroupi $\gets (s(i), t(i))$ \\
        \For{$j=1$ to $k$}{
            $s(i_j), t(i_j)\gets$ choose unique initial and target locations \nllabel{line:choose-sg}\\
            Add $(s(i_j), t(i_j))$ to \agentgroupi \\
        }
        Shuffle \agentgroupi and broadcast it to all agents \nllabel{line:shuffle-broadcast}\\
        Wait for all agents to broadcast their agents group \nllabel{line:wait} \\
        \eIf{$i$ is the designated planning agent}{
            $\Pi\gets$ solve for all agent groups \nllabel{line:solve}\\
            Broadcast $\Pi$ to all agents \\
        }{
            $\Pi\gets $ Wait for a solution from the planning agent\nllabel{line:wait-solution} \\
        }
        Extract single-agent plan for $(s(i),t(i))$ \nllabel{line:extract-solution}\\
\end{algorithm}

\plan{Give an example that explains the concept of k privacy (not just k=1)}
Figure~\ref{fig:example-kpp}(left) illustrates a \kppmapf problem with two agents and the solution created for it by \kpp for $k=2$. 
The mock agents are marked in green and orange and the dashed lines mark the plans created for them.

\begin{figure*}[h]
    \centering
    \begin{subfigure}[t]{0.23\textwidth}
    \centering
      \includegraphics[width=1\linewidth]{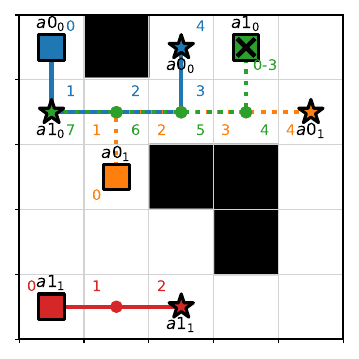}
      \caption{\kppmapf problem and the solution created for it by \kpp.}
      \label{fig:example-kpp}
      \Description{An example that shows the concept of k-privacy and the \kpp algorithm.}
    \end{subfigure}\hspace{1em}
    \begin{subfigure}[t]{0.23\textwidth}
    \centering
      \includegraphics[width=1\linewidth]{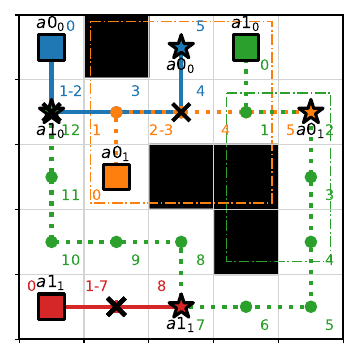}
      \caption{\ekppmapf problem. Dashed squares represent the FoV of $a_{0_1}, a_{1_0}$ at time step 3 of the plan.}
      \label{fig:example-fpp}
      \Description{An example that shows an \ekppmapf problem and the \fpp algorithm.}
    \end{subfigure}\hspace{1em}
    \begin{subfigure}[t]{0.23\textwidth}
    \centering
      \includegraphics[width=1\linewidth]{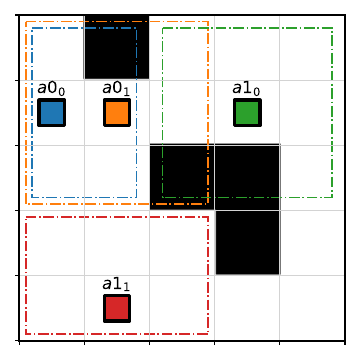}
      \caption{FoV of each agent at time step 1 of the plan from figure~\ref{fig:example-fpp}.}
      \label{fig:example-fpp_fov1}
      \Description{An example that shows the FoV of the agents at timestep 1.}
    \end{subfigure}\hspace{1em}
    \begin{subfigure}[t]{0.23\textwidth}
    \centering
      \includegraphics[width=1\linewidth]{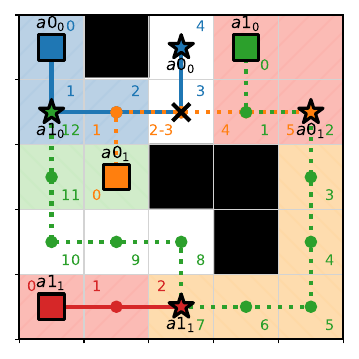}
      \caption{\ppfpp solution with initial and extended safe zones at time step 2 of the \fpp plan.}
      \label{fig:example-ppfpp}
      \Description{An example that shows \ppfpp safe zones.}
    \end{subfigure}\hspace{1em}
  \centering
  \caption[\kppmapf and \ekppmapf examples]{Examples of \kppmapf (\ref{fig:example-kpp}), \ekppmapf (\ref{fig:example-fpp}), FoV (\ref{fig:example-fpp_fov1}), and of \ppfpp (\ref{fig:example-ppfpp}) with 2 agents ($a_0$ and $a_1$) and $k=2$. Squares represents the initial location of each agent and stars represent the desired destination. The real agent of each group has a full line in the path and the mock agent has a dotted line. An X on the path of an agent marks that the agent waits in place in that spot for at least 1 timestep. The timestep of vertices in the plan appears near the vertex. In figure~\ref{fig:example-ppfpp} the blue and red vertices correspond to the initial safe zones for $a_0$ and $a_1$, respectively, while green and orange correspond to the extended safe zones.}
  \label{fig:examples}
  \Description{An example that shows the concept of k-privacy and the \kpp algorithm.}
\end{figure*}

\subsection{Choosing the Mock Agents}
\label{sec:choosing_mock_agents}
A critical step in \kpp is how each agent chooses the initial and target locations of its $k-1$ mock agents (line~\ref{line:choose-sg}).

\plan{Define a collision between assignments}

We consider an assignment of mock agents to have a collision between two agents $i, j$, if there exists a tuple (real or mock) $\langle s_{i_m}, g_{i_m} \rangle$ in the group of locations of agent $i$, and a tuple (real or mock) $\langle s_{j_n}, g_{j_n} \rangle$ in the group of agent $j$, where $s_{i_m} = s_{j_n} \lor g_{i_m} = g_{j_n}$.
Note that if any collision is found between two assignments, we cannot find a valid plan for the overall \kppmapf.
Also note that if two agents collide in their published groups of locations, and one of them ($i$) changes its previously published set $S$ by replacing a subset of locations $L \subset S$ to be another set of size $|L|$, then all other agents can understand that its real tuple $\langle s(i), g(i) \rangle \in S \setminus L$. This reduces $i$'s privacy to $k-|L|$ privacy instead of the wanted $k$ privacy.
So, in order to keep the needed $k$ privacy criteria, the algorithm used to pick mock agents need to publish the actual groups of locations only once, and to not have collisions.
We will now consider different options for selecting mock agents and discuss their pros and cons.

\subsubsection{Choosing Mock Agents Randomly}

\plan{analyze the probability of the probabilistic approach to fail.}
One approach to do so is to have each agent choose its group of locations randomly. 
A limitation for this approach is that two agents may collide in their assignments, as seen above.
We will now calculate the probability of such collision to occur.
The probability of no collision in the assignment of mock agents for a specific pair of agents in a graph with $|V|$ vertices, and a desire for $k$-privacy, when each agent chooses mock agents randomly is calculated in the following way:
Each agent $i$ has a predefined real tuple $\langle s(i), g(i) \rangle$, which do not collide with other agents real tuple. Then each agent chooses $k-1$ mock tuple, i.e. $2(k-1)$ vertices of the $k-1$ mock tuples from the remaining $|V|-1$ vertices, resulting in the following probability:
$\frac{\binom{|V|-1-2k}{2(k-1)}}{\binom{|V|-1}{2(k-1)}}$.
Therefore, The probability of no collisions in the assignment of $N$ agents is approximately:
\[\left( \frac{\binom{|V|-1-2k}{2(k-1)}}{\binom{|V|-1}{2(k-1)}} \right)^{\binom{N}{2}} \xrightarrow[\;]{N \to \infty, k > 1}\,0\]

As can be seen, this approach can lead to collisions with higher probability as the number of agents planning grows.
Next, we explain some other methods for choosing mock agents.

\subsubsection{Choosing Mock Agents using a DisCSP}

\plan{Model the problem of selecting mock agents as a DisCSP problem}

We suggest another approach for choosing the mock agents in a way that will both not collide with other agents and will preserve the privacy of each agent by not revealing the real (start, goal) tuple of each agent.
We can model the problem of choosing the mock agents in a privacy-preserving manner as a DisCSP.
Each agent will have $2(k-1)$ variables, representing the start and goal of its mock agents, and two additional variables for the real start and goal.
The domain of each variable is all possible locations in the graph ($V$).
The public constraints are for agents to avoid conflicts.
The private constraints of each agent are to ensure that its real start and goal are assigned to it.
Solving this DisCSP will ensure that every agent will have its real start and goal in its k-sized group, and there will be no conflicts, while making sure that this assignment does not reveal any private information.

The wanted privacy model here is to hide the private constraints of the agents, and make sure other agents cannot distinguish between the real tuple of each agent and its mock tuples.
This can be solved securely using the method in~\cite{yokoo2002secure}, since they make the search process in remote servers that do not know the values of the variables since they are encrypted, and the agents do not know any knowledge about the failed assignments that happens during the search process (which could reveal the collision between two agents groups), since they are not part of the search and only receive the final assignments for their groups.
Hence, we can conclude that no agent can distinguish the other agents real start and goal vertices from their set of disclosed locations.

Solving this DisCSP problem can be very expensive in large graphs, with many agents and large values of k, so one can think of using a different approach that we explain now.

\subsubsection{Mock Agents Dispatcher}

\plan{Explain the dispatcher model}

We suggest yet another approach of choosing the mock agents, by using an \emph{external dispatcher} agent whose sole role is to centralize the process of generating a unique initial and target location pairs when requested to do so by an agent (line~\ref{line:choose-sg}). 
This external dispatcher agent knows the real initial and target locations of all agents and makes sure they do not collide.

\plan{Say that using dispatcher is not too bad since it only knows the start and goals and not the path. The privacy of an agent will be entirely compromised only if the dispatcher "works together" with some agent.}

This can be viewed as some loss of privacy.
However, it is external to the path planning agents, and so it does not know the plans the agents end up selecting.
Thus, unless the dispatcher collaborates with one of the agents, the amount of privacy lost is limited to the dispatcher knowing the start and goal locations of the agents.

\plan{Motivate the dispatcher using the long-list-of-tasks use case}

In order to motivate the use of the external dispatcher, consider a different problem setting, where the dispatcher maintains a large pool of tasks and assigns $k$ candidate tasks to each agent, without knowing which of the $k$ tasks the agent will actually commit to.
This provides a natural way to generate indistinguishable alternatives without requiring plan-level information.

The \kpp algorithm's behavior is independent of the mock agents selection method in use, and one can use whichever method it prefers.
We chose to implement our experiments in section~\ref{sec:experimental-results} using the external dispatcher method because of its simplicity.

\commentout{

Sophisticated consensus mechanisms may be applied to result this problem probabilistically. In our work, we propose to use an external dispatcher agent whose sole role is to centralize the process of generate a unique initial and target locations pair when requested to do so by an agent (line~\ref{line:choose-sg}). 
This external dispatcher agent knows the real initial and target locations of all agents, and can be viewed as some loss of privacy. Yet it is external to the path planning agents, and can be implemented in a secure way.
}

\subsection{Theoretical Analysis of \kpp}
\label{sec:kpp-theory}

Due to space constraints, we prove informally, the main theoretical property of \kpp, namely, that it outputs valid solutions that preserves $k$ privacy. 

\begin{theorem}[\kpp Solves \kppmapf problems]
\label{theorem:kpp-solves-kppmapf}
    If the designated planning agent in \kpp returns a valid \mapf solution, then \kpp returns a valid \kppmapf solution.
\end{theorem}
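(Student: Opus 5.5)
The argument has two parts. First, the output of \kpp is a valid \mapf solution for the original problem $\tuple{G,s,t}$. Second, the messages exchanged during \kpp induce a belief state that preserves $k$-privacy in the sense of Definition~\ref{def:k-privacy}. I would treat these separately, since a valid \kppmapf solution is exactly a valid \mapf solution obtained while $k$-privacy holds.

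\textbf{Validity.} Let $\Pi$ be the joint plan returned by the planning agent for the enlarged problem with $k\cdot N$ start--target pairs. By the mock-agent selection step (line~\ref{line:choose-sg}), all pairs are unique and pairwise non-colliding in starts and targets. Hence the enlarged problem is a well-formed \mapf instance, and each real agent's pair $(s(i),t(i))$ appears in it exactly once. Each agent $i$ extracts (line~\ref{line:extract-solution}) the unique single-agent plan $\Pi_i$ from $s(i)$ to $t(i)$. The joint plan $\{\Pi_i\}_{i=1}^N$ is a sub-collection of $\Pi$. Vertex and swapping conflicts are pairwise properties, so any conflict among the extracted plans would already be a conflict in $\Pi$, contradicting validity. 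Each $\Pi_i$ also starts at $s(i)$ and ends at $t(i)$, so the extracted plans form a valid solution to $\tuple{G,s,t}$.

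\textbf{$k$-privacy.} The messages $\mu$ consist of the shuffled groups \agentgroupi and the broadcast plan $\Pi$. Fix an agent $i$ and a time step $t$. From the viewpoint of any other agent, each of the $k$ pairs in \agentgroupi is equally consistent with being the real one. The group is shuffled, its elements are chosen by a mechanism (dispatcher, DisCSP, or random choice) that does not expose which pair is real, and the group is published only once, which avoids the $k-|L|$ leakage discussed above. The solver treats all $k\cdot N$ agents symmetrically. Therefore $b_i(\mu,t)$ contains the location at time $t$ of every one of the $k$ plans in $\Pi$ associated with \agentgroupi.

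\textbf{Main obstacle.} These $k$ locations are pairwise distinct, because $\Pi$ is valid and thus has no vertex conflicts among the $k$ agents of the group. This gives $|b_i(\mu,t)|\ge k$; an agent that has finished its plan is treated as staying at its target, and targets are distinct by uniqueness. The hardest step is rigorously justifying the indistinguishability claim, namely that nothing in $\mu$ singles out the real pair. I would argue this informally: $\mu$ is invariant under permuting the labels within \agentgroupi, and I would rely on the security property of the chosen mock-selection method (e.g., that of \cite{yokoo2002secure}) rather than proving it from scratch.
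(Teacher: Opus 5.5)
Your proposal is correct and follows essentially the same route as the paper's proof outline. Validity of the enlarged solution forces the $k$ members of each agent group to occupy pairwise-distinct vertices at every time step, and because every message is symmetric with respect to which group member is real, $|b_i(\mu,t)|\ge k$. Your version is somewhat more explicit than the paper's outline, namely in the sub-collection argument for validity and in treating agents that rest at their targets; note that distinctness of targets within a group follows from the validity of $\Pi$, not from uniqueness of the pairs.
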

\textbf{Proof outline:}
A valid \mapf solution ensures there are no conflicts between all agents. This includes conflicts between the real agents and also conflicts between a real agent and the other (mock) agents in its agent group. 
This means that at any point in time the agents in an agent group occupy $k$ different locations. All messages sent between the agents are agnostic to who is the main agent and who is the mock agents within an agent group. Thus, the belief state will be the same regardless of who is the real agent. Thus, a conflict-free solution has been found and $k$-privacy preserved.

\plan{explain why \kpp is not complete}

Unfortunately, \kpp is not complete, since there could be a randomized agent group initial locations and targets that would make the original \mapf problem solvable but the new problem unsolvable. Similarly, \kpp is not optimal even if the designated planning agent uses an optimal \mapf algorithm.

\section{Execution-Level Privacy}
The agents in \mapf may be equipped with sensors that allow them to sense the location of nearby agents. 
In such cases, the agents risk revealing their locations during execution, 
even if they execute a \mapf solution generated by a planning-level privacy preserving \mapf algorithm. Next, we formally describe the problem that arises in these cases and discuss how to ensure privacy is still preserved, even during execution.

\subsection{Conflicting Field of Views}
\label{sec:eppmapf-def}

We formalize the sensing capability of an agent $i$ by a 
\emph{Field of View} (FoV) function $F_i$ that maps every possibly location $v$ of agent $i$ to the locations agent $i$ can sense when situated in $v$. 
If an agent can sense the location of another agent during execution, i.e., it enters if FoV, then that agent's privacy has been compromised. 
We refer to this as as FoV conflict and define it formally as follows. 
\begin{definition}[FoV Conflict]
    A pair of single agent plans $\Pi_i$ and $\Pi_j$ assigned to agents $i$ and $j$, respectively, are said to have a FoV conflict if there exists $t$ such that either $\Pi_i(t)\in FoV_j(\Pi_j(t))$ or 
    $\Pi_j(t)\in FoV_i(\Pi_i(t))$.    
\end{definition}

An \eppmapf problem is defined by a tuple \(\langle P^*, F \rangle\) where $P^*$ is a classical \mapf problem, and $\{F_1,\ldots,F_N\}$ is the agents' FoV functions. 
    A solution $\Pi$ for an \eppmapf problem is called valid if it is a valid solution for the underlying \mapf problem ($P^*$) and it has no FoV conflict.
It is hard to motivate preserving privacy during execution but not during planning. 
Thus, for the rest of this work we require a valid \eppmapf solution to also preserve $k$-privacy.

\commentout{
\plan{Add an example of a problem with field of view and a solution for it.}
Figure~\ref{fig:example-fpp} illustrates an example of a \ekppmapf problem and a valid solution for it for $N=2, k=2, \forall v:F_0(v)=F_1(v)=\{ u | |u.x - v.x| \leq 1 \And |u.y - v.y| \leq 1 \}$.
As can be seen, The plan from figure~\ref{fig:example-kpp} is not valid for the \ekppmapf problem, since the paths in that plan has $FoV$ conflict. 
As a result a different path for $a_{1_0}$ is chosen, but in timestep 2, agent $a_{0_1}$ must wait-in-place, since otherwise it will collide with agent $a_{1_0}$. As can be seen, the $FoV$ of the two agents do not contain the other agent, and so the plan is now valid. Notice that, for example in timestep 1, the $FoV$ of agents $a_{0_0}$ and $a_{0_1}$ contain the other agent, but it is fine since they are both in the same agent-group.

\subsection{Separating the Field of Views} 

While the \eppmapf problem is defined for every privacy preserving \mapf problem, in this paper we explore it for the \kppmapf problem.
We define the new problem of Execution k Privacy Preserving \mapf (\ekppmapf).
An \ekppmapf problem is defined by a tuple \(\langle G, s, t, k, F \rangle\) where:

\begin{itemize}
  \item \(G, s \text{ and } t\) are the same as the classical \mapf definition in section~\ref{sec:background}.
  \item \(k\) is the amount of privacy preserved in the plan, as defined in section~\ref{sec:kppmapf-def}.
  \item \(F_i\) is a FoV function for each agent, as defined in section~\ref{sec:eppmapf-def}.
\end{itemize}
}

\begin{definition}[Runtime k-Privacy]
    \label{def:runtime-k-privacy}
    An \eppmapf solution $\Pi$ preserves \emph{Runtime k-Privacy} if 
    the belief state generated from finding $\Pi$ preserves k-Privacy and there are no $FoV$ conflicts in $\Pi$.  
\end{definition}
The \ekppmapf problem is defined as the problem of finding a runtime $k$-privacy solution for a given MAPF problem, desired planning-level privacy $k$, and FoV functions. 

Definition~\ref{def:runtime-k-privacy} is important, since it makes the runtime stage, where agents are aware of the joint plan for all agents, k-privacy preserving, since if the agent $a_i$ saw $a_j$ at vertex $v$ during the runtime, at time $t$, it ruins the entire belief state of $a_j$ not only for $b_j^t$, but also for all $b_j^{t'} | t' \neq t$, since $a_i$ can now understand in hindsight that the path related to the $v$ that $a_j$ is traversing in the joint plan, is its real path.
In a different case, if agent $a_i$ traverses near vertex $v$ at time $t$ where $v \in b_j^t(\Pi)$ but $a_i$ can see that $a_j$ is not in $v$ at time $t$, then $a_i$ can reduce the belief state of $a_j$ to be ${b_j^t}' \gets b_j^t \setminus \{v\}$, and, again, it can do so for the entire path that is related to $v$ in the joint path, and by doing so, it reduces the privacy of $a_j$ from $k-Privacy$ to $(k-1)-Privacy$.
This is why it is important to preserve $Runtime \ k-Privacy$ when there are sensors in the agents and $k-Privacy$ is needed.

Figure~\ref{fig:example-fpp} illustrates an example of a \ekppmapf problem and a valid solution for it for $N=2, k=2, \forall v:F_0(v)=F_1(v)=\{ u | |u.x - v.x| \leq 1 \And |u.y - v.y| \leq 1 \}$.
As can be seen, The plan from figure~\ref{fig:example-kpp} is not valid for the \ekppmapf problem, since the paths in that plan has $FoV$ conflict. 
As a result a different path for $a_{1_0}$ is chosen, but in timestep 2, agent $a_{0_1}$ must wait-in-place, since otherwise it will collide with agent $a_{1_0}$. As can be seen, the $FoV$ of the two agents do not contain the other agent, and so the plan is now valid. Figure~\ref{fig:example-fpp_fov1} shows the $FoV$ of the agents at time step 1. Notice that the $FoV$ of agents $a_{0_0}$ and $a_{0_1}$ contain the other agent, but it is fine since they are both in the same agent-group.

Notice that \kppmapf problem is a sub-problem of an \ekppmapf problem, where \(\forall a_i \forall v \ F_i(v)=\{v\}\).
\commentout{
Obvious
\begin{corollary}
\label{cor:kppmapf-sub-ekppmapf}
    \kppmapf problem is a sub-problem of an \ekppmapf problem, where \(\forall a_i \forall v \ F_i(v)=\{v\}\)
\end{corollary}
}

A straightforward \ekppmapf planner can be obtained by extending the \kpp, resulting in the Field-of-View $k$-Privacy Preserving Planner (\fpp).
\fpp follows the standard \kpp framework but modifies the underlying \mapf planner to incorporate additional $FoV$ conflicts that prevent agents from entering each other’s fields of view.
These $FoV$ conflicts are enforced only between different agent groups, and not between sub-agents within the same group, since sub-agents represent hypothetical alternatives of a single real agent.
Consequently, visibility among sub-agents of the same group does not compromise privacy during execution.

Note that when selecting the agent groups in line~\ref{line:choose-sg} of algorithm~\ref{alg:kPP}, as seen in section~\ref{sec:choosing_mock_agents}, we need to consider a broader collision definition, which also takes into account that in each tuple of start, goal vertices the vertices are not in the $FoV$ of other agents tuples.

            


\subsection{Theoretical Analysis of \fpp}
\label{sec:fpp-theory}

By construction, \fpp preserves runtime $k$ privacy. 
As discussed in section~\ref{sec:kpp-theory}, \kpp is neither complete not optimal, and so \fpp is also neither complete nor optimal. 
Section~\ref{sec:enhancing} proposes a post-processing plan improvement step that can be performed on top of \fpp to reduce the cost of the solution returned without compromising the required runtime $k$ privacy requirement.


\subsection{Implementing \fpp} 
\label{sec:subsolvers}

\fpp requires the $Sol$ used by it to avoid FoV  conflicts with agents of different agent groups. In order to support it, one can alter available \mapf planners to support the $FoV \ conflict$ avoidance.
Of course, there will be no conflict if $a_{i_k}$ and $a_{i_q}$ which are in the same agent group $a_i$ will be in each other's FoV, and so the sub-solver we use should avoid considering this as a conflict. 
Next, we describe how to do so in two state-of-the-art \mapf planners: $PIBT$ \cite{okumura2022priority} and $LaCAM*$ \cite{okumura2023lacam,okumura2023lacam2,okumura2024lacam3}.


\plan{Implementation in PIBT}
PIBT is built to have a priority inheritance stage and a backtracking stage, inside a loop. The priority inheritance means that if an agent $a_i$ with higher priority wants to go to a vertex $v$ where an agent with lower priority $a_j$ resides currently, then it will start a PIBT session for $a_j$ in order for it to move from $v$ to some other vertex and let $a_i$ go to $v$, and in the process move other agents, even with higher priority than $a_j$, but with lower priority than $a_i$. 
If it cannot move to another location then it will backtrack and $a_i$ will have to find a different location to go to.
In order to support $FoV$ conflicts resolution, we don't move only the agent $a_j$ from its location, but all agents in the set $\{ a_k | \Pi_k(t) \in F_i(v) \lor v \in F_k(\Pi_k(t)) \}$ must move to clear the way for $a_i$, since otherwise there will be a $FoV$ conflict at time t. To do so we run PIBT on all of the set of agents in the field of view of $v$, and if one of them fail (cannot move from its current location), we backtrack all of the set (since they do not longer have the priority of $a_i$).

LaCAM$^*$ builds on PIBT to generate configuration. 
Thus, adapting it to support execution-level privacy is straightforward - simply use the adapted PIBT described above. 
We note that LaCAM$^*$ also includes a slight change to PIBT to support a \emph{swap} operation which increases performance. 
We did not implement this in our work as it requires additional modifications that we leave to future work. 

\commentout{
\subsection{Supporting \fpp in CBS}
\plan{Note - this subsection should be written only if we complete the CBS code before we need to send the article}
\plan{Implementation in CBS}
}

\section{Improving Solution Quality}
\label{sec:enhancing}

A key property of a runtime $k$-privacy solution is that for every timestep $t$ and agent $i$, all the locations planned for agent $i$ and associated $k-1$ mock agents will not be in the FoV of any other agent. 
We refer to such vertices, i.e., the vertices that are guaranteed to not be sensed by any other agents, as the \emph{safe zone} of agent $i$ at timestep $t$. 
Each agent can choose a different plan for itself without coordinating with the other agents as long is the new plan only occupies vertices in corresponding safe zones. 
This can be beneficial, as the agent may now prioritize its single-agent plan over those of its $k-1$ mock agents, potentially obtaining lower SOC for itself, without compromising privacy. 
Next, we discuss how this can be done. 

\subsection{Safe zones in \ekppmapf plans}

In this section we define notions to be used in the post process for improving the plans originated from running \fpp.

\plan{Define the initial safe-zones of an agent group}

\begin{definition}[Agent Group's FoV]
    \label{def:agent-group-fov}
    An agent group $a_i$'s FoV in timestep $t$ on a given plan $\Pi$ generated by \fpp on an \ekppmapf problem P, denoted $FoV_i^t(\Pi)$, is the set $\{v | \exists a_{i_j} v \in F_i(\Pi_j(t))\}$.
\end{definition}

In other words, $FoV_i^t(\Pi)$ is the set of vertices that are in the FoV of any agent in the group of $Mock_i$ in time $t$ using the plan $\Pi$.

\begin{definition}[Initial Safe-Zone]
    \label{def:initial-safe-zone}
    The initial safe-zone of an agent group $a_i$ in a specific timestep $t$ on a given plan $\Pi$ that was generated from running \fpp on an \ekppmapf problem P, denoted $IS_i^t(\Pi)$, is the set $\{v | F_i(v) \subseteq FoV_i^t(\Pi)\}$.
\end{definition}

In other words, $IS_i^t(\Pi)$ is the set of vertices that all of the vertices in their FoV are in the agent group's FoV.

\commentout{

\plan{Define the extended safe-zones of an agent group}

\begin{definition}[Vertex Group Distance]
    The distance of $v$ from the set of vertices $U$, denoted $d(v,U)$ is $Min(distance(v,u)|u \in U)$, where $distance$ is the euclidean distance between $v$ and $u$.
\end{definition}

\begin{definition}[Vertex Group Distance Ranking]
    The Ranking of a vertex group $U$ in the set of vertex groups $S$ to the vertex $v$, such that $U \in S$, denoted $Rd_S(v,U)$, is a ranking over the vertex groups in $S$ where $Rd_S(v,U)=1$ means that $U = ArgMin_{V \in S}(d(v,V))$, $Rd_S(v, U)=2$ means that $U$ is the second closest vertex group to $v$, and so on until $Rd_S(v,U)=|S|$.
\end{definition}

\begin{definition}[Extended Safe-Zone]
    \label{def:extended-safe-zone}
    The extended safe-zone of an agent group $a_i$ in a specific timestep $t$ on a given plan $\Pi_{Mock}$ that was generated from running \fpp on an \ekppmapf problem P, denoted $ES_i^t(\Pi_{Mock})$, is the set $IS_i^t(\Pi_{Mock}) \cup \{ v | (\nexists a_j v \in IS_j^t(\Pi_{Mock})) \And Rd_S(v,IS_i^t(\Pi_{Mock}))=1 \And (\left| d(v,IS_i^t(\Pi_{Mock}))-d(v,IS_j^t(\Pi_{Mock})) \right| > \sqrt{2*FoV \ radius},Rd_S(v,IS_j^t(\Pi_{Mock}))=2),S=\left\{IS_k^t(\Pi_{Mock})|a_k \in A \right\} \}$
\end{definition}

In other words, $ES_i^t(\Pi_{Mock})$ is the set of vertices where $v$ is not in any initial safe zone of any agents group, and v is closest to $IS_i^t(\Pi_{Mock})$, and the difference between the distances of $v$ from the initial safe zone of $a_i$ and of the second closest agent group $a_j$ is bigger than $\sqrt{2*FoV \ radius}$.
The condition of being bigger than $\sqrt{2*FoV \ radius}$ is important since it makes the extended safe-zone safe - as defined in definition~\ref{def:safe-zone-safeness}.
This will be proved in theorem~\ref{theorem:extended-safe-zone-safeness}.

}

\plan{Define safe zone safeness}

\begin{definition}[Separated Safe Zones]
    \label{def:safe-zone-safeness}
    A set of safe-zones $S(\Pi) = \left\{ S_i(\Pi) | \forall i \in A \right\}$, is called separated, if and only if: \\$\forall{t}\forall{i \in A} \forall{v \in S_i^t(\Pi)} \forall{j \neq i} \nexists{u \in S_j^t(\Pi)}: u \in F_i(v) || v \in F_j(u)$.
\end{definition}

In other words, A set of safe-zones $S(\Pi)$ is separated if all of the vertices in each $S_i(\Pi)$ are out of the field of view of each other in each timestep.

\begin{definition}[Symmetric Field of View function]
    A Field of View function $F$ is considered symmetric, if and only if $\forall v \in V \forall u \in F(v): v \in F(u)$.
\end{definition}

\begin{definition}[PathCost]
    A single agent $i$ path cost, denoted $PathCost(\Pi_i)$, is the amount of actions performed by agent $i$ in $\Pi_i$ until it reaches $g(i)$ and stays there.
\end{definition}

\begin{definition}[Real Agent SoC]
    \label{def:real-agent-soc}
    A real-agent SoC of an \ekppmapf plan $\Pi$, denoted $RSoC(\Pi)$, is calculated as follows: $RSoC(\Pi) \gets \sum_{a_i \in A}PathCost(\Pi[a_i.real\_agent])$. 
\end{definition}

In other words, $RSoC(\Pi)$ is the sum of costs of the real agents' paths.

\plan{Show in the example of the start on this section which vertices are the safe-zones and which are the extended safe-zones}

\subsection{Post-Processing \fpp algorithm}

\plan{write a pseudo-code for it}
\begin{algorithm}[b!]
		\caption{Post-Processing (\fpp) Solutions (\ppfpp)}
            \label{alg:ppfpp}
            \SetKw{KwInput}{Input:}
            \SetKw{KwOutput}{Output:}
            
            \KwInput{$\Pi$: An \ekppmapf solution}\\
            \KwOutput{$\Pi'$ refined for all agent groups}
            
		\Fn{PPfPP}{
                $IS(\Pi) \gets \text{initial safe zone as defined in definition~\ref{def:initial-safe-zone}}$ \;
                $ES(\Pi)\gets ExtendSafeZone(IS(\Pi))$ \;
                \label{line:ppfpp-es}
                \ForEach{$a_i \in A$}{
                    $a_i.FindPathForRealAgent(ES_i(\Pi))$ \;
                    \label{line:FindPathForRealAgent}
                }
            }

            \Fn{ExtendSafeZone(IS)}{
                \label{line:ppfpp-extend-safe-zone}
                ES $\gets$ IS  \;
                \label{line:ppfpp-extend-safe-zone-initialize}

                \ForEach{$t \in T$}{
                    \DoWhile{$\neg done$}{
                        $done \gets TRUE$\;
                        \ForEach{$a_i \in A$}{
                            \label{line:alg_ppfpp_extend_safe_zone__foreach_agent}
                            $picked \gets a_i.ExtendSafeZone(ES^t)$  \;
                            \label{line:ppfpp-a_i-extend-safe-zone}
                            $done \gets done \land \neg picked$  \;
                        }
                    }
                }
                \Return ES \;
            }

\end{algorithm}

\plan{Explain the pseudo-code}

Algorithm~\ref{alg:ppfpp} is run by a shared planner. 
It first calculates the $IS$, as defined in definition~\ref{def:initial-safe-zone}.
Then using it, it calculates the $ES$, which is the extended safe zone. 
It does so by using $a_i.ExtendSafeZone(ES^t)$.
$a_i.ExtendSafeZone(ES^t)$ is a function for each agent, that defines the policy of the agent on picking the next vertex to add to its extended safe zone $ES_i^t$.
Then, each agent group $a_i$ can use $ES_i$ to calculate the best single agent plan for its real agent that passes only inside of the $ES_i$ (this is the \textit{FindPathForRealAgent} method in line~\ref{line:FindPathForRealAgent} in Alg.~\ref{alg:ppfpp}). In our implementation, it is done by using $SIPP$~\cite{phillips2011sipp}, since it handles well planning with safe intervals, which in our case where the safe intervals from $ES_i$. 
\begin{definition}[Rules for Extending Safe Zone]
    \label{def:rules-for-extended-safe-zone}
    The function $a_i.ExtendSafeZone(ES^t)$ is a function that $a_i$ uses for extending the safe zone $ES_i^t$. It must pick one vertex $v$ to add to $ES_i^t$. It returns $TRUE$ if found $v$ that was added to $ES_i^t$, $FALSE$ otherwise, and it must follow the following rules:
    \begin{enumerate}
        \item $v$ is a neighbor of some $u \in ES_i^t$.
        \item $\forall a_j, v \notin ES_j^t$.
        \label{item:rules-for-extended-safe-zone--es-unique}
        \item $\forall a_j \neq a_i, \nexists u \in ES_j^t: u \in F_i(v) \lor v \in F_j(u)$.
        \label{item:rules-for-extended-safe-zone--out-of-fov}
        \item Using it must preserve the runtime k-Privacy of $\mu,\Pi$ after running it.
    \end{enumerate}
\end{definition}

For example, a function that qualifies for all of the rules above is the function $ExtendSafeZoneRandomly$ defined in algorithm~\ref{alg:extend-safe-zone-random}. Algorithm~\ref{alg:extend-safe-zone-random} can be implemented distributively, by setting the same random seed and the order of agents to pass by in the for loop at line~\ref{line:alg_ppfpp_extend_safe_zone__foreach_agent} to each agent for ensuring that they all result in the same $ES$.

\begin{algorithm}[b!]
    \caption{Extend safe zone randomly.}
    \label{alg:extend-safe-zone-random}
    
    \Fn{$a_i.ExtendSafeZoneRandomly(ES^t)$}{
        $X \gets \{ v | \exists u \in ES_i^t: (u,v) \in E \lor (v, u) \in E \} \setminus ES_i^t$ \;
        \ForEach{$a_j \neq a_i$}{
            $X \gets X \setminus ES_j^t \setminus ES_j^{t-1}$ \;
            $X \gets X \setminus \{ v | \exists u \in ES_j^t: u \in F_i(v) \lor v \in F_j(u) \}$ \;
        }
        \If{$|X| > 0$}{
            $v \gets RandomPick(X)$ \;
            $ES_i^t \gets ES_i^t \cup \{ v \}$ \;
            \Return $TRUE$ \;
        }
        \Return $FALSE$ \;
    }
\end{algorithm}

\plan{Show in the example added in beginning of the section how the path will change after running the post-process algorithm in two cases}

Figure~\ref{fig:example-ppfpp} shows an example for the initial and extended safe zones, and for the plan generated from the extended safe zones using the \ppfpp algorithm.
As can be seen, the \fpp plan in figure~\ref{fig:example-fpp} is not the same as the \ppfpp plan in figure~\ref{fig:example-ppfpp}.

\plan{First case - using the initial safe-zones of the agent group}

For example, $a_{0_0}$ waited at time step 1, in the \fpp plan, since otherwise it would have a conflict with $a_{0_1}$ which must wait at time step 2. Although, in the \ppfpp we can see that the initial safe zone of $a_0$ contains the steps needed for $a_{0_0}$ to continue without waiting until it reaches its goal. This reduces the path cost of $a_0$, which its real agent is $a_{0_0}$, to 4 instead of 5.

\plan{Second case - using the extended safe-zones of the agent group}

$a_{1_1}$ on the other hand, cannot continue to its goal on time step 2 based on its initial safe zone, since it contains only the vertex it steps on in the \fpp plan. But, the extended safe zone does contain its goal at time step 2, and so it can continue to it. This reduces the plan cost of $a_1$, which its real agent is $a_{1_1}$, to 2 instead of 8.

\subsection{Theoretical Analysis of \ppfpp}

In this section we will show that the extension of safe zones is safe, that \ppfpp returns valid \ekppmapf plans, and that the resulting plan cost is less or equal to the original plan.

\plan{Theorem for safeness}

\begin{theorem}[Initial safe-Zone Safeness]
    \label{theorem:initial-safe-zone-safeness}
    An initial safe-zone $IS(\Pi)$ calculated from definition~\ref{def:initial-safe-zone}, when $\forall a_i, a_j \in A \forall v \in V: F_i(v)=F_j(v) \And F_i \text{ is symmetric}$, is safe.
\end{theorem}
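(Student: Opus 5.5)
Throughout, "safe" is read as "separated" in the sense of Definition~\ref{def:safe-zone-safeness}. We must show that for every timestep $t$, distinct groups $a_i \neq a_j$, $v \in IS_i^t(\Pi)$ and $u \in IS_j^t(\Pi)$, neither $u \in F_i(v)$ nor $v \in F_j(u)$. Since all FoV functions coincide, write $F = F_i = F_j$. Symmetry of $F$ then gives $u \in F(v) \iff v \in F(u)$, so both disjuncts reduce to the single condition $u \in F(v)$. The proof is by contradiction: assume $u \in F(v)$ and derive an FoV conflict between $\Pi$'s groups $a_i$ and $a_j$, contradicting that $\Pi$ is a valid \fpp output, i.e.\ has runtime $k$-privacy (Definition~\ref{def:runtime-k-privacy}).

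\textbf{Key step.} Suppose $u \in F(v)$. By symmetry $v \in F(u)$.
\begin{enumerate}
\item Since $v \in IS_i^t(\Pi)$, Definition~\ref{def:initial-safe-zone} gives $F(v) \subseteq FoV_i^t(\Pi)$. Hence $u \in FoV_i^t(\Pi)$.
\item By Definition~\ref{def:agent-group-fov}, some member $a_{i_m}$ of group $i$ has $u \in F(\Pi_{i_m}(t))$.
\item Symmetrically, since $u \in IS_j^t(\Pi)$ we get $v \in FoV_j^t(\Pi)$. So some $a_{j_n}$ has $v \in F(\Pi_{j_n}(t))$.
\end{enumerate}
These two facts alone do not yet make two group members see each other. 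The intended fix is to use the \emph{reflexivity} of the initial safe zone: $u \in F(u)$, which I take to hold for any sensible FoV function (the default $F(v)=\{v\}$ and the square FoV used in the paper both satisfy it). Then $u \in F(u) \subseteq FoV_j^t(\Pi)$, and we combine this with $u \in FoV_i^t(\Pi)$.

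This still only shows that $u$ lies in both groups' FoV, not that one group member lies inside the other's FoV. That gap is the main obstacle. I see two routes.

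\textbf{First route: the FoV-conflict constraint of \fpp.} Recall the adapted PIBT of Section~\ref{sec:subsolvers}. An agent $a_i$ moving to $v$ forces every agent whose position is in $F_i(v)$, or whose FoV contains $v$, to clear out. I would try to argue that \fpp in fact enforces $FoV_i^t(\Pi) \cap FoV_j^t(\Pi) = \emptyset$ for distinct groups. With identical symmetric FoV functions, "two members see each other" relates to "their FoVs overlap" only when $F$ has a transitivity-like property, e.g.\ a metric ball whose radius is interpreted suitably. So this route may require reading the FoV-conflict notion as forbidding overlap.

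\textbf{Second route: a direct chain argument.} From $u \in F(\Pi_{i_m}(t))$ and symmetry, $\Pi_{i_m}(t) \in F(u)$. Since $u \in IS_j^t(\Pi)$, $F(u) \subseteq FoV_j^t(\Pi)$, so $\Pi_{i_m}(t) \in FoV_j^t(\Pi)$. By Definition~\ref{def:agent-group-fov}, $\Pi_{i_m}(t) \in F(\Pi_{j_n}(t))$ for some member $a_{j_n}$ of group $j$. That is exactly an FoV conflict between $a_{i_m}$ and $a_{j_n}$, which belong to different groups. This contradicts \fpp's guarantee, so no such $u,v$ exist and the initial safe zones are separated.

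The second route avoids any transitivity assumption and uses symmetry and the common $F$ precisely where needed, so I would present it as the proof. Checking that it is airtight at each inclusion step is the part I expect to need the most care.
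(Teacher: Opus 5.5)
Your second route is correct and is essentially the paper's own argument. From $u \in F(v) \subseteq FoV_i^t(\Pi)$ you obtain a member $a_{i_m}$ with $u \in F(\Pi_{i_m}(t))$; symmetry and $F(u) \subseteq FoV_j^t(\Pi)$ then place $\Pi_{i_m}(t)$ in the FoV of some $a_{j_n}$, a cross-group FoV conflict that contradicts the validity of the \fpp plan. The reflexivity assumption $u \in F(u)$ and the first route are unnecessary and can simply be dropped from the write-up.
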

\commentout{
\begin{proof}
    Given an \ekppmapf problem P, a valid plan $\Pi$ that solves P, and was returned by using \fpp on P, an initial safe-zone $IS(\Pi)$ calculated from definition~\ref{def:initial-safe-zone}, and $\forall a_i, a_j \in A \forall v \in V: F_i(v)=F_j(v) \And F_i \text{ is symmetric}$.
    Let $a_i, a_j \in A| a_i \neq a_j, t \in T$.
    Also, let $v \in IS_i^t(\Pi)$.
    Assume negatively that $\exists u \in IS_j^t: \left( u \in F_i(v) \| v \in F_j(u) \right)$, without loss of generality, assume $u \in F_i(v)$.
    From definition~\ref{def:initial-safe-zone}, $u \in FoV_i^t(\Pi)$, hence from definition~\ref{def:agent-group-fov}, $\exists a_{i_p}: u \in F_i(\Pi_p(t))$, because that $F_i$ is symmetric, $\Rightarrow \Pi_p(t) \in F_i(u)$ and because $F_i = F_j \Rightarrow \Pi_p(t) \in F_j(u)$. Because of $u \in IS_j^t(\Pi)$, then $F_j(u) \subseteq FoV_j^t(\Pi) \Rightarrow \Pi_p(t) \in FoV_j^t(\Pi)$, and so from definition~\ref{def:agent-group-fov}, $\exists a_{j_q}: \Pi_p(t) \in F_j(\Pi_q(t))$. But $a_{i_p}$ and $a_{j_q}$ are of different agent groups, and so this is a FoV conflict $\Rightarrow$ from definition~\ref{def:runtime-k-privacy}, $\Pi$ does $not$ preserve Runtime k-Privacy, and so from definition~\ref{def:valid-ekppmapf-plan} $\Pi$ is not a valid \ekppmapf plan, in contradiction to the assumption that $\Pi$ is a valid plan that solves P.
    $\Rightarrow \nexists u \in IS_j^t: \left( u \in F_i(v) \| v \in F_j(u) \right) \Rightarrow$ from definition~\ref{def:safe-zone-safeness} $IS(\Pi)$ is considered safe.
\end{proof}
}

\begin{theorem}[Extended Safe-Zone Safeness]
    \label{theorem:extended-safe-zone-safeness}
    An extended safe-zone $ES(\Pi)$ calculated from running $ExtendSafeZone$ in line~\ref{line:ppfpp-extend-safe-zone} in algorithm~\ref{alg:ppfpp} on $IS$ calculated from definition~\ref{def:initial-safe-zone}, when the function $a_i.ExtendSafeZone(ES^t)$ from line~\ref{line:ppfpp-a_i-extend-safe-zone} follows the rules from definition~\ref{def:rules-for-extended-safe-zone}, when $\forall a_i, a_j \in A \forall v \in V: F_i(v)=F_j(v) \And F_i \text{ is symmetric}$, is safe.
\end{theorem}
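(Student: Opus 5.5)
We argue by induction on the execution of \textit{ExtendSafeZone}, where ``safe'' means separated in the sense of Definition~\ref{def:safe-zone-safeness}. The invariant is that, for every timestep $t$, the current collection $ES^t=\{ES_i^t\}_{a_i\in A}$ is separated. Since the outer loop of \textit{ExtendSafeZone} handles each $t$ independently and only modifies $ES^t$, it suffices to fix one $t$ and track the sequence of single-vertex insertions made by the calls $a_i.ExtendSafeZone(ES^t)$.

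\emph{Base case.} At line~\ref{line:ppfpp-extend-safe-zone-initialize} we have $ES=IS(\Pi)$. By Theorem~\ref{theorem:initial-safe-zone-safeness}, which uses exactly our hypothesis that all $F_i$ are equal and symmetric, $IS(\Pi)$ is separated. So the invariant holds before any insertion.

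\emph{Inductive step.} Suppose $ES^t$ is separated and $a_i$ adds a vertex $v$ to $ES_i^t$. The only pairs $(w,u)$ with $w\in ES_i^t$, $u\in ES_j^t$, $j\neq i$, that did not already exist involve the new vertex $v$. There are two orientations to check.
\begin{itemize}
\item Case $w=v$ with $u\in ES_j^t$. Rule~\ref{item:rules-for-extended-safe-zone--out-of-fov} of Definition~\ref{def:rules-for-extended-safe-zone} forbids directly any $u\in ES_j^t$ with $u\in F_i(v)$ or $v\in F_j(u)$.
\item The reversed orientation, where the quantifier of Definition~\ref{def:safe-zone-safeness} ranges over $ES_j^t$ and asks whether $v\in F_j(u)$ or $u\in F_i(v)$. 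This is the same disjunction as in the first case, so Rule~\ref{item:rules-for-extended-safe-zone--out-of-fov} also rules it out.
\end{itemize}
Symmetry and $F_i=F_j$ make these conditions interchangeable, so no orientation is missed. Pairs of vertices not involving $v$ are unchanged and remain non-conflicting by the induction hypothesis. Rule~\ref{item:rules-for-extended-safe-zone--es-unique} additionally ensures that $v$ does not already belong to another group's zone. This matters because, if $F_i(v)\ni v$, such a $v$ would otherwise conflict with itself across two groups; in any case it keeps the zones disjoint.

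Termination follows because each call either adds a new vertex or returns $FALSE$, and $V$ is finite, so the do--while loop ends. Hence the final $ES^t$ is separated for every $t$, which gives the theorem. The main subtlety is bookkeeping rather than mathematics. First, the definition of separation quantifies over ordered pairs of groups, so both orientations must be covered; the symmetric, uniform FoV hypothesis is what reduces this to the single check in Rule~\ref{item:rules-for-extended-safe-zone--out-of-fov}. Second, the rule must be evaluated against the \emph{current} $ES^t$ at the moment of insertion. Since insertions happen sequentially, later insertions by other agents are themselves checked against $v$, so the sequential order preserves the invariant.
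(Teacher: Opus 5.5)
Your proof is correct and is essentially the paper's argument. The paper splits into four cases according to whether each of the two vertices lies in the initial safe zone, invoking Theorem~\ref{theorem:initial-safe-zone-safeness} when both do and Rule~\ref{item:rules-for-extended-safe-zone--out-of-fov} at the later insertion otherwise, which is your induction over insertions applied pair by pair. One small correction to your commentary: the equal-and-symmetric FoV hypothesis is needed only in the base case, because Rule~\ref{item:rules-for-extended-safe-zone--out-of-fov} already contains the same two-sided disjunction as Definition~\ref{def:safe-zone-safeness}, so your inductive step uses no symmetry.
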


\plan{Proof for safeness}
\commentout{
\begin{proof}
    Given an \ekppmapf problem P, a valid plan $\Pi$ that solves P, and was returned by using \fpp on P, an extended safe-zone $ES(\Pi)$ calculated from running $ExtendSafeZone$, where the function $a_i.ExtendSafeZone(ES^t)$ from line~\ref{line:ppfpp-a_i-extend-safe-zone} follows the rules from definition~\ref{def:rules-for-extended-safe-zone}, and $\forall a_i, a_j \in A \forall v \in V: F_i(v)=F_j(v) \And F_i \text{ is symmetric}$.
    Let $a_i, a_j \in A| a_i \neq a_j, t \in T$.
    Also, let $v \in ES_i^t(\Pi), u \in ES_j^t(\Pi)$.
    There are four possible cases:
    \begin{itemize}
        \item $v \in IS_i^t(\Pi) \And u \in IS_j^t(\Pi)$: From theorem~\ref{theorem:initial-safe-zone-safeness} and definition~\ref{def:safe-zone-safeness}, $u \notin F_i(v) \And v \notin F_j(u)$.
        \item $v \in IS_i^t(\Pi) \And u \notin IS_j^t(\Pi)$: So in line~\ref{line:ppfpp-a_i-extend-safe-zone} in algorithm~\ref{alg:ppfpp}, there was some iteration where $u$ was added to $ES_j^t$. In this iteration we called $a_i.ExtendSafeZone(ES^t)$ where $IS_i^t \subseteq ES_i^t$, since it was set at line~\ref{line:ppfpp-extend-safe-zone-initialize} to be $IS_i^t$, and $u \notin ES_j^t$ since it was added in this iteration to $ES_j^t$. From definition~\ref{def:rules-for-extended-safe-zone}, rule number~\ref{item:rules-for-extended-safe-zone--out-of-fov} we can conclude that $u \notin F_i(v) \And v \notin F_j(u)$.
        \item $v \notin IS_i^t(\Pi) \And u \in IS_j^t(\Pi)$: Same as the previous case.
        \item $v \notin IS_i^t(\Pi) \And u \notin IS_j^t(\Pi)$: So, without loss of generality, there was an iteration in line~\ref{line:ppfpp-a_i-extend-safe-zone} in algorithm~\ref{alg:ppfpp}, where $v$ was added to $ES_i^t$, and $u$ was not yet added to $ES_j^t$. This iteration is the same as the previous case, and so $u \notin F_i(v) \And v \notin F_j(u)$.
    \end{itemize}
    As you can see, in all cases $u \notin F_i(v) \And v \notin F_j(u) \Rightarrow$ from definition~\ref{def:safe-zone-safeness} $ES(\Pi)$ is considered safe.
\end{proof}
}
\plan{Theorem for \ppfpp returning valid \ekppmapf plans}

\begin{theorem}[\ppfpp Solves \ekppmapf Problems]
    \label{theorem:ppfpp-solves-ekppmapf}
    If $ES$ was calculated using a $a_i.ExtendSafeZone(ES^t)$ function that follows the rules from definition~\ref{def:rules-for-extended-safe-zone}, and $\forall a_i, a_j \in A \forall v \in V: F_i(v)=F_j(v) \And F_i \text{ is symmetric} \And \forall u \in v.neighbors: u \in F_i(v)$, then \ppfpp returns valid \ekppmapf plans.
\end{theorem}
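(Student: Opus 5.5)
The plan is to check the three requirements of a valid \ekppmapf solution separately for the plan $\Pi'$ produced by \ppfpp. We have to show that $\Pi'$ is a valid \mapf solution (no vertex and no swapping conflicts), that it has no FoV conflicts between different agent groups, and that the belief state still preserves $k$-privacy. Throughout, $\Pi'$ differs from the \fpp plan $\Pi$ only in that each real agent's path is replaced by a path that, by construction of \textit{FindPathForRealAgent}, satisfies $\Pi'_i(t)\in ES_i^t(\Pi)$ for every $t$. Mock agents are discarded at execution, so only real agents' paths matter for validity.

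\textbf{FoV conflicts.} By Theorem~\ref{theorem:extended-safe-zone-safeness}, under the hypotheses $F_i=F_j$ and $F_i$ symmetric, $ES(\Pi)$ is separated in the sense of Definition~\ref{def:safe-zone-safeness}. Take two real agents $i\neq j$ and a time $t$. Since $\Pi'_i(t)\in ES_i^t$ and $\Pi'_j(t)\in ES_j^t$, separation gives $\Pi'_j(t)\notin F_i(\Pi'_i(t))$ and $\Pi'_i(t)\notin F_j(\Pi'_j(t))$. Hence there is no FoV conflict.

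\textbf{Vertex and swapping conflicts.} Here the extra hypothesis that every neighbor $u$ of $v$ lies in $F_i(v)$ is used. Since $F_i(v)$ is nonempty in a meaningful sense, we also need $v\in F_i(v)$; if this is not given explicitly, I would instead argue vertex conflicts directly via rule~\ref{item:rules-for-extended-safe-zone--es-unique} together with disjointness of the initial zones. Under $v\in F_i(v)$, a vertex conflict $\Pi'_i(t)=\Pi'_j(t)=v$ would give $v\in F_i(v)$ with both vertices in different safe zones, contradicting separation. For a swapping conflict, $\Pi'_i(t)=\Pi'_j(t+1)=u$ and $\Pi'_i(t+1)=\Pi'_j(t)=w$ with $(u,w)\in E$. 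At time $t$, agent $i$ is at $u\in ES_i^t$ and agent $j$ is at $w\in ES_j^t$. Since $w$ is a neighbor of $u$, we get $w\in F_i(u)$, contradicting separation at time $t$. So adjacency-visibility rules out swaps, which is exactly why that hypothesis appears in the statement.

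\textbf{Privacy.} The messages $\mu$ from \fpp are unchanged. \ppfpp adds only deterministic computations from shared information: the common seed and agent order for $ExtendSafeZone$, and rule 4 of Definition~\ref{def:rules-for-extended-safe-zone}. The real agent's replanning is local and never broadcast. So the belief state coincides with that of \fpp, which preserves $k$-privacy. Combined with the absence of FoV conflicts shown above, Definition~\ref{def:runtime-k-privacy} is met.

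\textbf{Main obstacle.} The delicate step is the vertex-conflict case, which relies on $v\in F_i(v)$. That property is only implicit in the paper; symmetry and the neighbor condition give it only when $v$ has a neighbor. If the property is missing, I would fall back on rule~\ref{item:rules-for-extended-safe-zone--es-unique} plus $IS_i^t\cap IS_j^t=\emptyset$. The latter follows from the validity of $\Pi$, since the group FoVs of different groups cannot overlap at their own positions. A second minor point is that the real agent must be able to stay at its goal inside $ES_i^t$ after the horizon. I would handle this by noting that $t(i)$ remains occupied by the group in $\Pi$ and so stays in $IS_i^t$ for all later $t$.
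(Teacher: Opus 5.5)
\textbf{A gap in the vertex-conflict step.} Your decomposition is the paper's. FoV conflicts are excluded by separation of $ES$ (Theorem~\ref{theorem:extended-safe-zone-safeness}). A swap is turned into a FoV conflict at the earlier time step via the neighbor-in-FoV hypothesis, exactly as in the paper. Your privacy paragraph is extra, since the paper's proof does not address the belief state.

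The problem is the vertex-conflict step.
\begin{itemize}
\item Your primary argument needs $v\in F(v)$, which is not a hypothesis of the theorem. Your claim that symmetry plus the neighbor condition supply it when $v$ has a neighbor is wrong. For a neighbor $u$ of $v$ they give $u\in F(v)$ and $v\in F(u)$, never $v\in F(v)$. A concrete case is $F(v)=N(v)$, the open neighborhood, which satisfies every stated hypothesis.
\item Your fallback fails too, because $IS_i^t\cap IS_j^t=\emptyset$ does not follow from validity of $\Pi$. Take the 4-cycle with edges $ab, bc, cd, da$, with $F=N$ and $k=1$, and let groups $i$ and $j$ occupy $a$ and $c$ at time $t$. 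There is no vertex or FoV conflict, since $c\notin N(a)=\{b,d\}$. Yet $FoV_i^t=FoV_j^t=\{b,d\}$, hence $IS_i^t=IS_j^t=\{a,c\}$.
\item Rule~2 of Definition~\ref{def:rules-for-extended-safe-zone} cannot repair this. It only constrains vertices added during extension, not overlaps already present in $IS$.
\end{itemize}
So under the literal hypotheses, your argument gives no control over a vertex of $ES_i^t\cap ES_j^t$.

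\textbf{The paper has the same hole, and the fix.} The paper's own proof dismisses vertex conflicts by citing rule~2 alone, so it leaves the same gap. You were right to sense that something is missing, but neither of your remedies closes it. The clean fix is to add reflexivity, $v\in F_i(v)$, as a hypothesis; it holds for the $F^r$ used in the experiments. With it, your primary argument is complete and in fact simpler than the paper's. Taking $u=v$ in the separation property immediately rules out $ES_i^t\cap ES_j^t\neq\emptyset$ for $i\neq j$, so rule~2 is not needed at all.
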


\plan{Proof for valid plans}
\commentout{
\begin{proof}
\sloppy{
    Given that $ES$ in algorithm~\ref{alg:ppfpp} is calculated using a $a_i.ExtendSafeZone(ES^t)$ function that follows the rules from definition~\ref{def:rules-for-extended-safe-zone}, and $\forall a_i, a_j \in A \forall v \in V: F_i(v)=F_j(v) \And F_i \text{ is symmetric} \And \forall u \in v.neighbors: u \in F_i(v)$.
    Also given an \ekppmapf problem P, a plan $\Pi$ originated from running the \fpp algorithm, and a plan $\Pi'$ that returned from running \ppfpp on $\Pi$.
    From theorem~\ref{theorem:extended-safe-zone-safeness} we can conclude that $ES(\Pi)$ is safe. So $\forall a_i \in A$, if we find a plan for $a_i.real\_agent$ that stays only in the safe-zone $ES_i$, it will not conflict with any other $a_j \neq a_i$ in the FoV. Also, there will be no $vertex$ conflicts, because of rule number~\ref{item:rules-for-extended-safe-zone--es-unique} in definition~\ref{def:rules-for-extended-safe-zone}. Let $a_i, a_j \in A: a_i \neq a_j$ assume negatively that $\exists t \in T:$ there is a swapping conflict between $a_i$ and $a_j$ in $\Pi'$ on time steps $t-1, t$. Let $v$ be the vertex $a_i$ is at time $t-1$ on $\Pi'$ and $u$ be the vertex $a_j$ is at time $t-1$ on $Pi'$. From the assumptions ($\forall a_i\in A \forall v \in V \forall u \in v.neighbors: u \in F_i(v)$), $\Rightarrow u \in F_i(v) \And v \in F_j(u) \Rightarrow$ in time $t-1$ there is a $FoV$ conflict - but we proved (in this proof) that there are no $FoV$ conflicts, so there is a contradiction, $\Rightarrow \nexists t \in T:$ there is a swapping conflict between $a_i$ and $a_j$ in $\Pi'$ on time steps $t-1, t \Rightarrow$ there are no $swapping$ conflicts in the resulting plan $\Pi'$. So from definition~\ref{def:valid-ekppmapf-plan}, $\Pi'$ is a valid plan for P.}
\end{proof}
}
The proofs for theorems~\ref{theorem:initial-safe-zone-safeness}, theorem~\ref{theorem:extended-safe-zone-safeness}, and theorem~\ref{theorem:ppfpp-solves-ekppmapf} are pretty straightforward, and hence are attached in the supplementary material of this paper.

\plan{Explain that \ppfpp cannot be also run on \kppmapf plans}

Even though \kppmapf is a sub-problem of \ekppmapf, we cannot run \kppmapf plans inside the \ppfpp algorithm to improve their plan cost, since in Theorem~\ref{theorem:ppfpp-solves-ekppmapf} we must have the neighbors of each vertex inside the FoV function in order for it to return valid plans (to avoid $swapping$ conflicts).

\plan{Theorem for improving cost}

\begin{theorem}[\ppfpp Improves $RSoC(\Pi)$]
    \label{theorem:ppfpp_improves_rsoc}
    If $ES$ was calculated using a $a_i.ExtendSafeZone(ES^t)$ function that follows the rules from definition~\ref{def:rules-for-extended-safe-zone}, and $\forall a_i, a_j \in A \forall v \in V: F_i(v)=F_j(v) \And F_i \text{ is symmetric}$, and $\Pi'$ is the result of running \ppfpp on a valid \ekppmapf plan $\Pi$, then $RSoC(\Pi') \leq RSoC(\Pi)$.
\end{theorem}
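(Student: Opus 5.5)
The plan is to show that, for every agent group $a_i$, the original real-agent path $\Pi_i^{real}$ (the single-agent plan in $\Pi$ for the pair $(s(i),t(i))$) is itself a feasible candidate for \textit{FindPathForRealAgent} when it is restricted to $ES_i$. Since that procedure returns a cost-minimal path inside $ES_i$, we then get $PathCost(\Pi'_i) \leq PathCost(\Pi_i^{real})$. Summing over all agent groups, as in Definition~\ref{def:real-agent-soc}, gives $RSoC(\Pi') \leq RSoC(\Pi)$. The argument is therefore per agent and needs no interaction between groups beyond what is already encoded in $ES$.

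\textbf{Step 1: the original path lies in the initial safe zone.} Fix $a_i$ and a timestep $t$, and let $v=\Pi_p(t)$, where $a_{i_p}$ is the real agent of group $a_i$. By Definition~\ref{def:agent-group-fov}, $F_i(\Pi_p(t)) \subseteq FoV_i^t(\Pi)$, because $a_{i_p}$ is itself a member of the group. Hence $F_i(v)\subseteq FoV_i^t(\Pi)$, and Definition~\ref{def:initial-safe-zone} gives $v\in IS_i^t(\Pi)$. For timesteps after the real agent has reached $t(i)$ and waits there, the same argument applies, provided we adopt the usual convention that $\Pi_p(t)=t(i)$ for all later $t$ in the horizon $T$ over which $IS$ is computed. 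The same reasoning shows that every mock agent's location also lies in $IS_i^t$.

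\textbf{Step 2: monotonicity of the extension.} \textit{ExtendSafeZone} initializes $ES\gets IS$ (line~\ref{line:ppfpp-extend-safe-zone-initialize}). Every call to $a_i.ExtendSafeZone(ES^t)$ only adds vertices to $ES_i^t$, as stated in Definition~\ref{def:rules-for-extended-safe-zone}. No step removes vertices from any agent's zone: rule~\ref{item:rules-for-extended-safe-zone--es-unique} and rule~\ref{item:rules-for-extended-safe-zone--out-of-fov} restrict only which vertices may be added. Consequently $IS_i^t\subseteq ES_i^t$ for all $i$ and $t$, and by Step 1 the path $\Pi_i^{real}$ visits only vertices of $ES_i^t$ at each $t$. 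Its moves are unchanged, so it is a valid single-agent plan from $s(i)$ to $t(i)$ within $ES_i$.

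\textbf{Step 3: optimality of the replanner, the main obstacle.} We need \textit{FindPathForRealAgent} (SIPP over the safe intervals induced by $ES_i$) to return a minimum-$PathCost$ path among those staying in $ES_i$. SIPP is complete and optimal for single-agent planning with time-indexed safe intervals, so it returns a path of cost at most that of any feasible path, in particular $\Pi_i^{real}$. The delicate points are that the search must respect the same time horizon as $ES$, and that the goal condition of ``reaching $t(i)$ and staying there'' must be checkable within $ES_i$. For the horizon, I would argue that the original path stays at $t(i)\in ES_i^{t}$ for all remaining $t$, so a feasible goal-staying witness always exists. If the implementation does not guarantee optimality, the fallback is to define \textit{FindPathForRealAgent} to keep $\Pi_i^{real}$ whenever no cheaper path is found, which gives the inequality trivially. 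Finally, summing $PathCost(\Pi'_i)\le PathCost(\Pi_i^{real})$ over all $a_i\in A$ yields the theorem. The symmetry and equal-FoV hypotheses are not needed for this inequality itself; they only guarantee, through Theorem~\ref{theorem:extended-safe-zone-safeness}, that $\Pi'$ is meaningful.
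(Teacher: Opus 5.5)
Your proposal is correct and follows the same argument as the paper: the real agent's original locations lie in $IS_i^t(\Pi)$ by Definition~\ref{def:initial-safe-zone}, $IS \subseteq ES$ by the initialization and add-only growth in \textit{ExtendSafeZone}, and the optimality of SIPP within $ES_i$ gives the per-group inequality, which you then sum. Your remarks on the time horizon, and your observation that the equal-FoV and symmetry hypotheses are not needed for this inequality, are accurate refinements that the paper leaves implicit.
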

The proof of theorem~\ref{theorem:ppfpp_improves_rsoc} is trivial as the $ES$ will always contain the previous plan and SIPP is an optimal algorithm.

\commentout{
\plan{Proof for improving cost}
\begin{proof}
    Given that $ES$ in algorithm~\ref{alg:ppfpp} is calculated using a $a_i.ExtendSafeZone(ES^t)$ function that follows the rules from definition~\ref{def:rules-for-extended-safe-zone}, and $\forall a_i, a_j \in A \forall v \in V: F_i(v)=F_j(v) \And F_i \text{ is symmetric}$.
    Also given an \ekppmapf problem P, a plan $\Pi$ originated from running the \fpp algorithm, and a plan $\Pi'$ that returned from running \ppfpp on $\Pi$.
    From definition~\ref{def:initial-safe-zone} we can conclude that $\forall a_i \in A, \forall t \in T$, where the $a_{i_j}$ is the real agent of $a_i$: $u_{i_j}^t \in IS_i^t(\Pi)$, and from line~\ref{line:ppfpp-extend-safe-zone-initialize} in algorithm~\ref{alg:ppfpp} we can conclude that $IS_i^t(\Pi) \subseteq ES_i^t(\Pi) \Rightarrow u_{i_j}^t \in ES_i^t(\Pi)$.
    So $ES_i$ contains all of the vertices that $a_{i_j}$ had in the original plan $\Pi$, and so the path $p_i$ that walks through them still exists in $ES_i$.
    Since $SIPP$ returns the optimal single agent path that walks through the safe intervals given to it, and the safe intervals represents $ES_i$, it will have a chance to pick $p_i$, and it will return a path $p_i'$, where $PathCost(p_i') \leq PathCost(p_i)$.
    If we sum all of the $p_i'$ costs, we will get: $RSoC(\Pi') = \sum_{a_i \in A} PathCost(p_i') \leq \sum_{a_i \in A} PathCost(p_i) = RSoC(\Pi)$.
\end{proof}
}

\plan{Theorem for algorithm~\ref{alg:extend-safe-zone-random} keeping rules from definition~\ref{def:rules-for-extended-safe-zone}}

\commentout{
\begin{corollary}
    Algorithm~\ref{alg:extend-safe-zone-random} keeps the rules from definition~\ref{def:rules-for-extended-safe-zone}.
\end{corollary}
This is easy to see, since each rule has a set minus from $X$ until only valid vertices can be picked from $X$, and then it is picked randomly so no private information of $a_i$ is used to pick the vertex and so no private information of $a_i$ is leaked.
}

\section{Experimental Results}

\label{sec:experimental-results}

\plan{Introduce the section with what is inside the section}
In this section, we evaluate our algorithms on a standard \mapf benchmark \cite{stern2019multi} using the modified PIBT~\cite{okumura2022priority} and LaCAM$^*$~\cite{okumura2024lacam3} algorithms as described in Section~\ref{sec:subsolvers}. 

\commentout{
\plan{Define the FoV function we use in the experiments}
In this paper we use a $FoV$ function defined in the following way:
$F_i^r(v)=\{ u | |u.x - v.x| \leq r \And |u.y - v.y| \leq r \}$

The $FoV$ function we use has a parameter 
of $r \geq 0$ which is the $FoV$ radius.
Notice that $\forall_{v \in V}, \forall_{a_i, a_j}, \forall_{r \geq 0}: F_i^r(v)=F_j^r(v) \And F_i^r$ is symmetric.
Also notice that when $r=0 \Rightarrow F_i^0(v)=\{ v \}$, and as seen in corollary~\ref{cor:kppmapf-sub-ekppmapf}, in this case the \fpp solves a \kppmapf problem.
\begin{corollary}
    \label{cor:fov_radius_bigger_than_0_enables_ppfpp}
    When $r>0 \Rightarrow \forall v,u \text{ s.t. } u \in v.neighbors: u \in F_i^r(v) \Rightarrow$ we can use \ppfpp to improve \fpp generated plans' $RSoC$ according to theorem~\ref{theorem:ppfpp-solves-ekppmapf}.
\end{corollary}
}
\plan{Define the FoV function we use in the experiments}
We ran our experiments 4-neighborhood grids from the standard grid-based \mapf benchmark \cite{stern2019multi}. Twelve grids were selected from this benchmark with different size and sparseness.
The maps chosen to run experiments on, with their corresponding amount of vertices, are:
(1) brc202d ($|V|=43{,}151$), (2) lt\_gallowstemplar\_n ($|V|=10{,}021$), (3) maze-32-32-2 ($|V|=666$), (4) orz900d ($|V|=96{,}603$), (5) ost003d ($|V|=13{,}214$), (6) random-32-32-20 ($|V|=819$), (7) random-64-64-20 ($|V|=3{,}270$), (8) room-32-32-4 ($|V|=682$), (9) room-64-64-16 ($|V|=3{,}646$), (10) room-64-64-8 ($|V|=3{,}232$), (11) warehouse-20-40-10-2-1 ($|V|=22{,}599$), and (12) warehouse-20-40-10-2-2 ($|V|=38{,}756$).
In terms of sensing, we assumed an agent can sense nearby agent based on the following parametric and symmetric FoV function:
\begin{equation}
 F_i^r(v)=\{ u | |u.x - v.x| \leq r \And |u.y - v.y| \leq r \}   
\end{equation}

\subsection{Results}


\plan{experiments for \fpp over the entire maps and sub solvers}
In the first set of experiment, we varied the number of agents $N= 10, 20, \ldots, 50$, desired $k$-privacy values of $k=1,2,3$, and FoV sizes $r=1,2,$ and $3$. 
Each setting was run 3 times with different random seeds, and was given 1 minute to finish.
The experiments ran on a cluster of computers, each with 20 CPUs and 40GB of RAM, and in parallel with 4 processes running different instances.

Figure~\ref{fig:fpp_k} shows a cactus chart of the RSoC of the problem against the \#solved problems for using both sub solvers LaCAM* and PIBT in \fpp with different k values. It was analyzed over the entire experiment results, including all different agent amounts and $FoV \ r$ values.
The x axis represents the amount of solved instances, and the y axis represents the RSoC of the solved instance.
As expected, increasing $k$ decreases the amount of solved instances, for example, in figure~\ref{fig:fpp_k__random-64-64-20}, we saw that using LaCAM$^*$ with $k=1$ solved 56 instances, $k=2$ solved 37 and $k=3$ solved 29.
In most cases, using LaCAM$^*$ as the sub-solver of \fpp solved more instances, for example in figure~\ref{fig:fpp_k__brc202d} with $k=2$, we saw that LaCAM$^*$ solved 38 while PIBT solved only 28 instances.
Notice that the RSoC of the instances that were solved by LaCAM$^*$ and not by PIBT (the harder instances) was extremely high. This is due to the fact that LaCAM$^*$ is an any-time algorithm that improves the SoC of the found plan when more time is given, so on harder problems it found an initial plan and improved it only a little in comparison to easier problems. This trend can be seen for example in figures~\ref{fig:fpp_k__brc202d} and \ref{fig:fpp_k__random-64-64-20}.

Figure~\ref{fig:fpp_fov} shows a cactus chart similar to figure~\ref{fig:fpp_k}, but using different FoV radius values.
Increasing the FoV radius drastically impacts both the amount of solved instances and the solution quality. For example in the map maze-32-32-2 in figure~\ref{fig:fpp_fov__maze-32-32-2} the only solved instances were of FoV radius $=0$.
We also see that in figure~\ref{fig:fpp_fov__warehouse-20-40-10-2-2} it was not so drastically affected from the increase in FoV. We assume it is because of the many narrow corridors within the map that make the instances with FoV radius $> 0$ have a very little amount of $FoV$ conflicts, since they can just follow each other in the corridor in a safe distance from each other and they will be able to pass without a conflict.

Similar trends were observed in the other maps. We show these results for both figures~\ref{fig:fpp_k} and~\ref{fig:fpp_fov} in the supplementary material.




\begin{figure*}[t]
\centering
\begin{subfigure}[t]{0.23\textwidth}
\centering
  \includegraphics[width=1\linewidth]{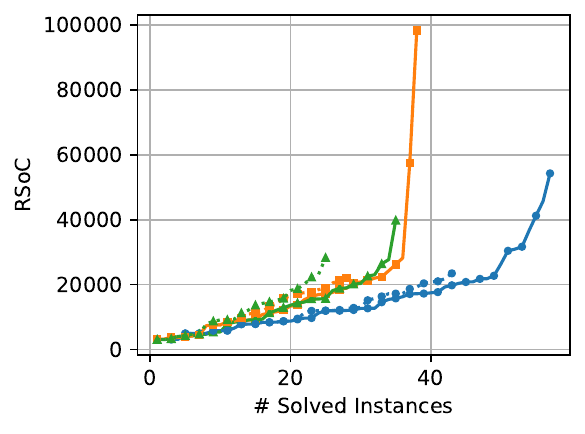}
  \caption{$brc202d$}
  \label{fig:fpp_k__brc202d}
  \Description{Cactus chart of RSoC for each configuration of k in each sub-solver (LaCAM/PIBT) over \# of solved instances for the map brc202d.}
\end{subfigure}\hspace{1em}
\begin{subfigure}[t]{0.23\textwidth}
\centering
  \includegraphics[width=1\linewidth]{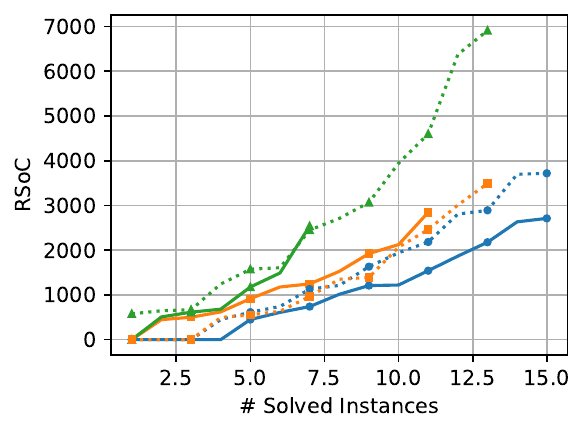}
  \caption{$maze-32-32-2$}
  \label{fig:fpp_k__maze-32-32-2}
  \Description{Cactus chart of RSoC for each configuration of k in each sub-solver (LaCAM/PIBT) over \# of solved instances for the map maze-32-32-2.}
\end{subfigure}\hspace{1em}
\begin{subfigure}[t]{0.23\textwidth}
\centering
  \includegraphics[width=1\linewidth]{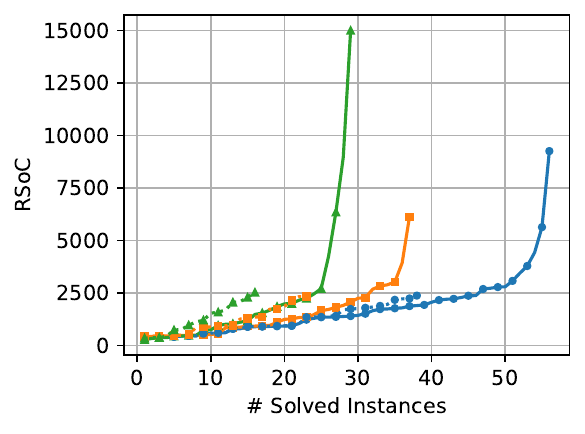}
  \caption{$random-64-64-20$}
  \label{fig:fpp_k__random-64-64-20}
  \Description{Cactus chart of RSoC for each configuration of k in each sub-solver (LaCAM/PIBT) over \# of solved instances for the map random-64-64-20.}
\end{subfigure}\hspace{1em}
\begin{subfigure}[t]{0.23\textwidth}
\centering
  \includegraphics[width=1\linewidth]{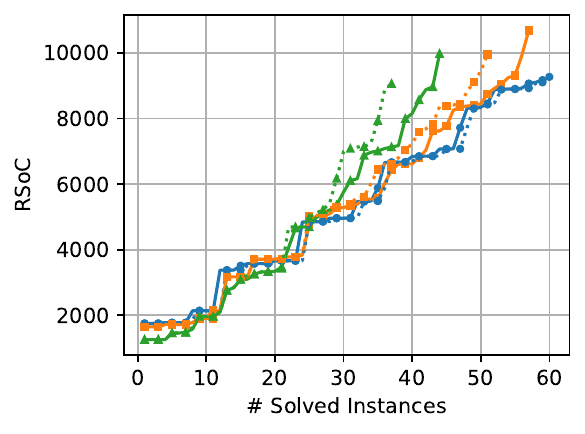}
  \caption{$warehouse-20-40-10-2-2$}
  \label{fig:fpp_k__warehouse-20-40-10-2-2}
  \Description{Cactus chart of RSoC for each configuration of k in each sub-solver (LaCAM/PIBT) over \# of solved instances for the map warehouse-20-40-10-2-2.}
\end{subfigure}\hspace{1em}
\begin{minipage}{\textwidth}
    \centering
    \includegraphics[width=0.7\linewidth]{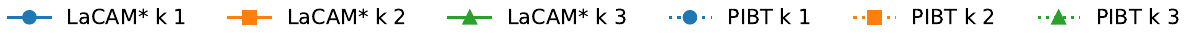}
    \label{fig:fpp_k__legend}
    \Description{Legend}
\end{minipage}
\caption{Cactus chart of $RSoC$ for each configuration of $k$ in each sub-solver $(LaCAM/PIBT)$ over \# of solved instances.}
\label{fig:fpp_k}
\end{figure*}

\begin{figure*}[t]
\centering
\begin{subfigure}[t]{0.23\textwidth}
\centering
  \includegraphics[width=1\linewidth]{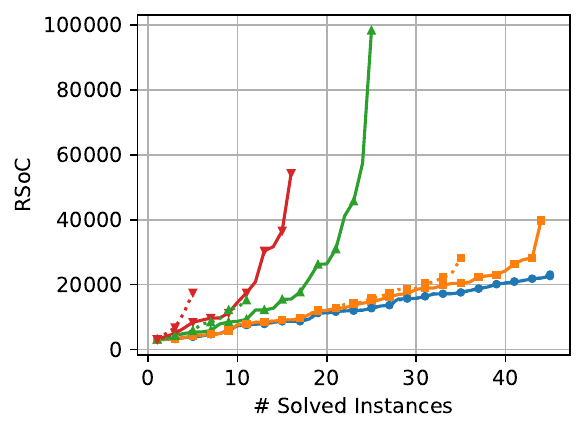}
  \caption{$brc202d$}
  \label{fig:fpp_fov__brc202d}
  \Description{Cactus chart of RSoC for each configuration of FoV in each sub-solver (LaCAM/PIBT) over \# of solved instances for the map brc202d.}
\end{subfigure}\hspace{1em}
\begin{subfigure}[t]{0.23\textwidth}
\centering
  \includegraphics[width=1\linewidth]{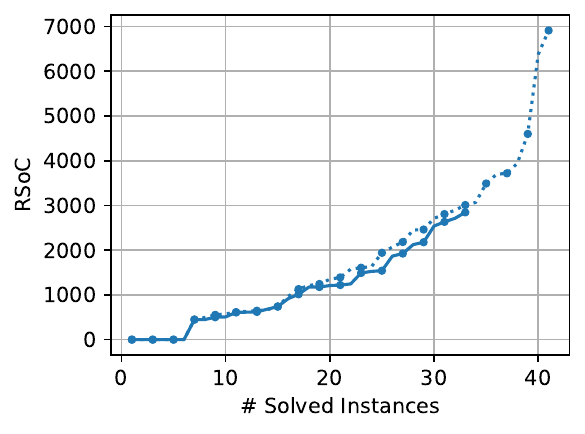}
  \caption{$maze-32-32-2$}
  \label{fig:fpp_fov__maze-32-32-2}
  \Description{Cactus chart of RSoC for each configuration of FoV in each sub-solver (LaCAM/PIBT) over \# of solved instances for the map maze-32-32-2.}
\end{subfigure}\hspace{1em}
\begin{subfigure}[t]{0.23\textwidth}
\centering
  \includegraphics[width=1\linewidth]{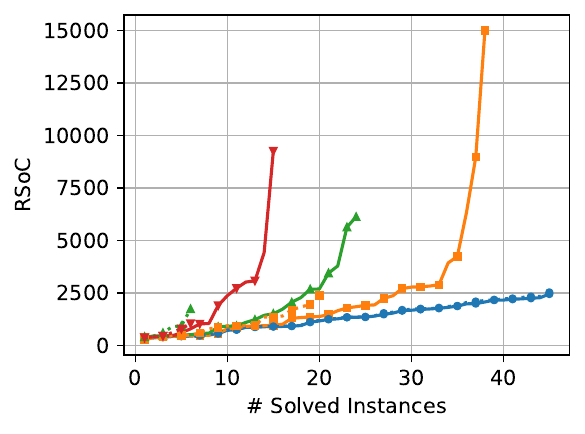}
  \caption{$random-64-64-20$}
  \label{fig:fpp_fov__random-64-64-20}
  \Description{Cactus chart of RSoC for each configuration of FoV in each sub-solver (LaCAM/PIBT) over \# of solved instances for the map random-64-64-20.}
\end{subfigure}\hspace{1em}
\begin{subfigure}[t]{0.23\textwidth}
\centering
  \includegraphics[width=1\linewidth]{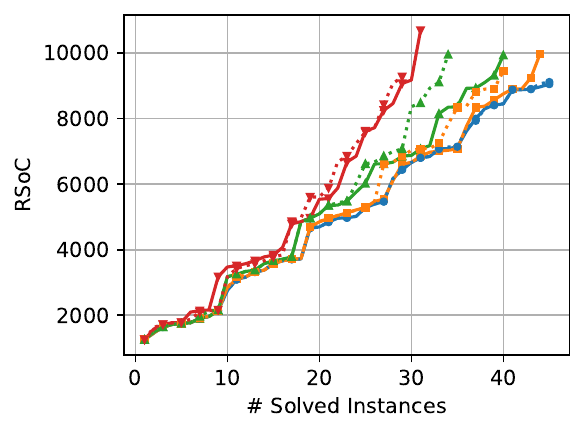}
  \caption{$warehouse-20-40-10-2-2$}
  \label{fig:fpp_fov__warehouse-20-40-10-2-2}
  \Description{Cactus chart of RSoC for each configuration of FoV in each sub-solver (LaCAM/PIBT) over \# of solved instances for the map warehouse-20-40-10-2-2.}
\end{subfigure}\hspace{1em}
\begin{minipage}{\textwidth}
    \centering
    \includegraphics[width=0.9\linewidth]{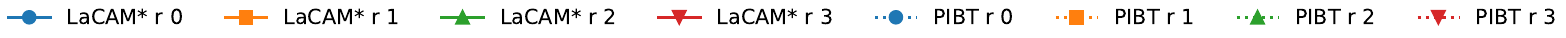}
    \label{fig:fpp_fov__legend}
    \Description{Legend}
\end{minipage}
\caption{Cactus chart of $RSoC$ for each configuration of $FoV$ in each sub-solver $(LaCAM/PIBT)$ over \# of solved instances.}
\label{fig:fpp_fov}
\end{figure*}


\commentout{

\begin{figure*}[t]
\centering
\begin{subfigure}[t]{0.23\textwidth}
\centering
  \includegraphics[width=1\linewidth]{plots/brc202d/k}
  \caption{$brc202d$}
  \label{fig:fpp_k__brc202d}
  \Description{Cactus chart of RSoC for each configuration of k in each sub-solver (LaCAM/PIBT) over \# of solved instances for the map brc202d.}
\end{subfigure}\hspace{1em}
\begin{subfigure}[t]{0.23\textwidth}
\centering
  \includegraphics[width=1\linewidth]{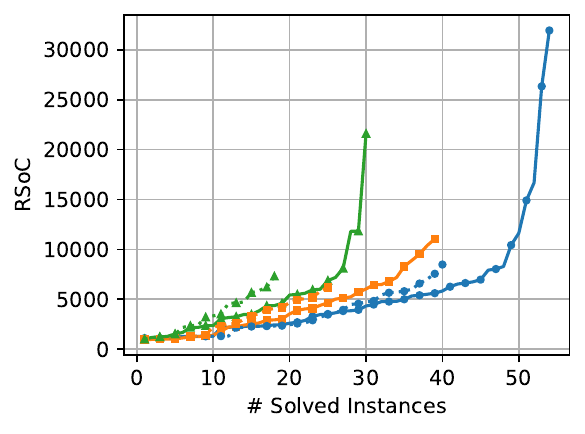}
  \caption{$lt\_gallowstemplar\_n$}
  \label{fig:fpp_k__lt_gallowstemplar_n}
  \Description{Cactus chart of RSoC for each configuration of k in each sub-solver (LaCAM/PIBT) over \# of solved instances for the map lt_gallowstemplar_n.}
\end{subfigure}\hspace{1em}
\begin{subfigure}[t]{0.23\textwidth}
\centering
  \includegraphics[width=1\linewidth]{plots/maze-32-32-2/k}
  \caption{$maze-32-32-2$}
  \label{fig:fpp_k__maze-32-32-2}
  \Description{Cactus chart of RSoC for each configuration of k in each sub-solver (LaCAM/PIBT) over \# of solved instances for the map maze-32-32-2.}
\end{subfigure}\hspace{1em}
\begin{subfigure}[t]{0.23\textwidth}
\centering
  \includegraphics[width=1\linewidth]{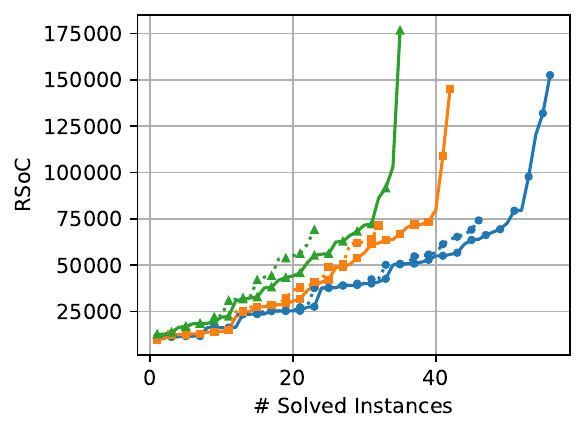}
  \caption{$orz900d$}
  \label{fig:fpp_k__orz900d}
  \Description{Cactus chart of RSoC for each configuration of k in each sub-solver (LaCAM/PIBT) over \# of solved instances for the map orz900d.}
\end{subfigure}\hspace{1em}
\begin{subfigure}[t]{0.23\textwidth}
\centering
  \includegraphics[width=1\linewidth]{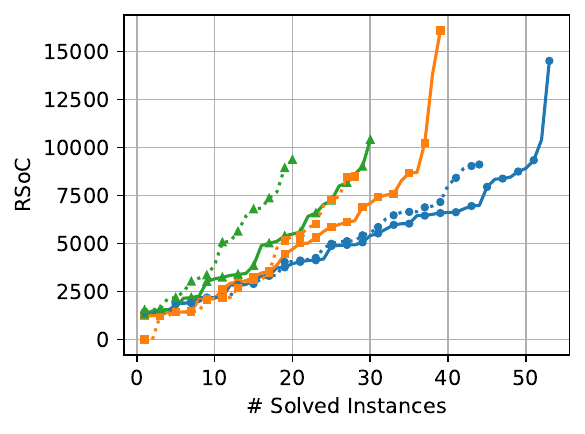}
  \caption{$ost003d$}
  \label{fig:fpp_k__ost003d}
  \Description{Cactus chart of RSoC for each configuration of k in each sub-solver (LaCAM/PIBT) over \# of solved instances for the map ost003d.}
\end{subfigure}\hspace{1em}
\begin{subfigure}[t]{0.23\textwidth}
\centering
  \includegraphics[width=1\linewidth]{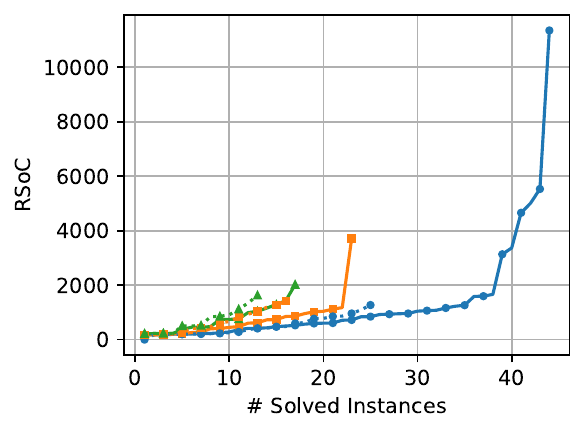}
  \caption{$random-32-32-20$}
  \label{fig:fpp_k__random-32-32-20}
  \Description{Cactus chart of RSoC for each configuration of k in each sub-solver (LaCAM/PIBT) over \# of solved instances for the map random-32-32-20.}
\end{subfigure}\hspace{1em}
\begin{subfigure}[t]{0.23\textwidth}
\centering
  \includegraphics[width=1\linewidth]{plots/random-64-64-20/k}
  \caption{$random-64-64-20$}
  \label{fig:fpp_k__random-64-64-20}
  \Description{Cactus chart of RSoC for each configuration of k in each sub-solver (LaCAM/PIBT) over \# of solved instances for the map random-64-64-20.}
\end{subfigure}\hspace{1em}
\begin{subfigure}[t]{0.23\textwidth}
\centering
  \includegraphics[width=1\linewidth]{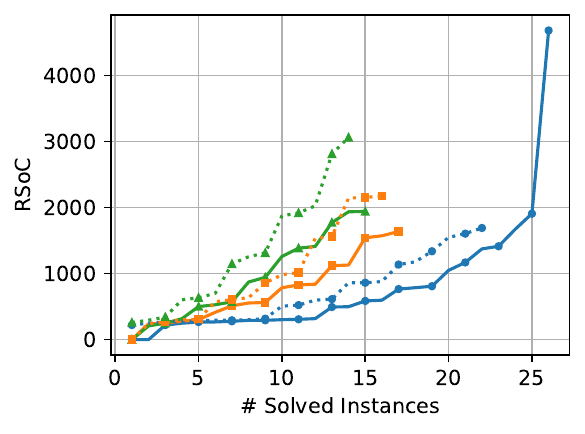}
  \caption{$room-32-32-4$}
  \label{fig:fpp_k__room-32-32-4}
  \Description{Cactus chart of RSoC for each configuration of k in each sub-solver (LaCAM/PIBT) over \# of solved instances for the map room-32-32-4.}
\end{subfigure}\hspace{1em}
\begin{subfigure}[t]{0.23\textwidth}
\centering
  \includegraphics[width=1\linewidth]{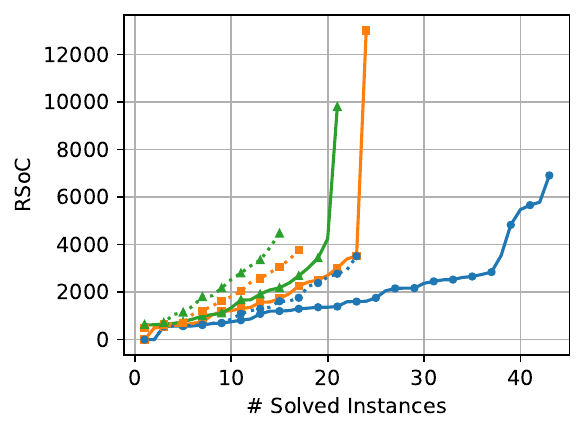}
  \caption{$room-64-64-8$}
  \label{fig:fpp_k__room-64-64-8}
  \Description{Cactus chart of RSoC for each configuration of k in each sub-solver (LaCAM/PIBT) over \# of solved instances for the map room-64-64-8.}
\end{subfigure}\hspace{1em}
\begin{subfigure}[t]{0.23\textwidth}
\centering
  \includegraphics[width=1\linewidth]{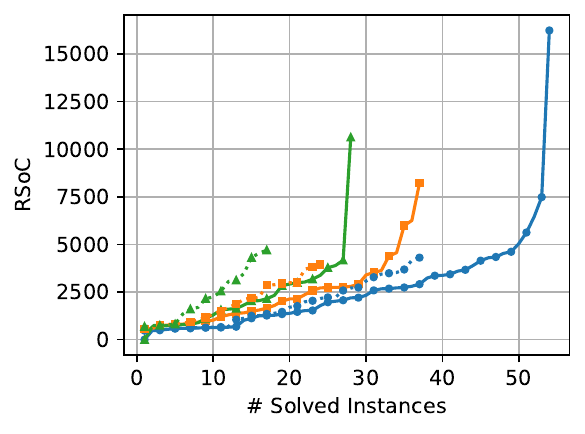}
  \caption{$room-64-64-16$}
  \label{fig:fpp_k__room-64-64-16}
  \Description{Cactus chart of RSoC for each configuration of k in each sub-solver (LaCAM/PIBT) over \# of solved instances for the map room-64-64-16.}
\end{subfigure}\hspace{1em}
\begin{subfigure}[t]{0.23\textwidth}
\centering
  \includegraphics[width=1\linewidth]{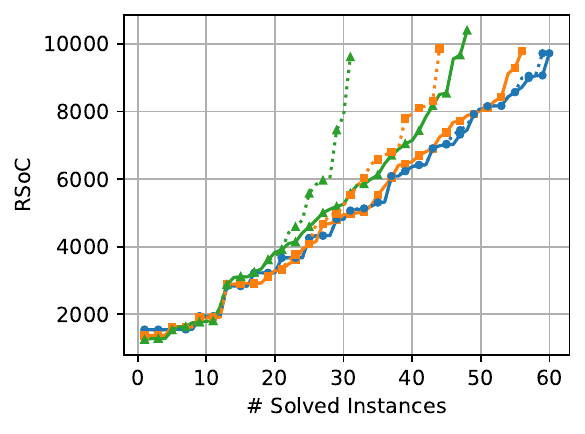}
  \caption{$warehouse-20-40-10-2-1$}
  \label{fig:fpp_k__warehouse-20-40-10-2-1}
  \Description{Cactus chart of RSoC for each configuration of k in each sub-solver (LaCAM/PIBT) over \# of solved instances for the map warehouse-20-40-10-2-1.}
\end{subfigure}\hspace{1em}
\begin{subfigure}[t]{0.23\textwidth}
\centering
  \includegraphics[width=1\linewidth]{plots/warehouse-20-40-10-2-2/k}
  \caption{$warehouse-20-40-10-2-2$}
  \label{fig:fpp_k__warehouse-20-40-10-2-2}
  \Description{Cactus chart of RSoC for each configuration of k in each sub-solver (LaCAM/PIBT) over \# of solved instances for the map warehouse-20-40-10-2-2.}
\end{subfigure}\hspace{1em}
\begin{minipage}{\textwidth}
    \centering
    \includegraphics[width=0.7\linewidth]{plots/k_legend}
    \label{fig:fpp_k__legend}
    \Description{Legend}
\end{minipage}
\caption{Cactus chart of $RSoC$ for each configuration of $k$ in each sub-solver $(LaCAM/PIBT)$ over \# of solved instances.}
\label{fig:fpp_k}
\end{figure*}

\begin{figure*}[t]
\centering
\begin{subfigure}[t]{0.23\textwidth}
\centering
  \includegraphics[width=1\linewidth]{plots/brc202d/fov}
  \caption{$brc202d$}
  \label{fig:fpp_fov__brc202d}
  \Description{Cactus chart of RSoC for each configuration of FoV in each sub-solver (LaCAM/PIBT) over \# of solved instances for the map brc202d.}
\end{subfigure}\hspace{1em}
\begin{subfigure}[t]{0.23\textwidth}
\centering
  \includegraphics[width=1\linewidth]{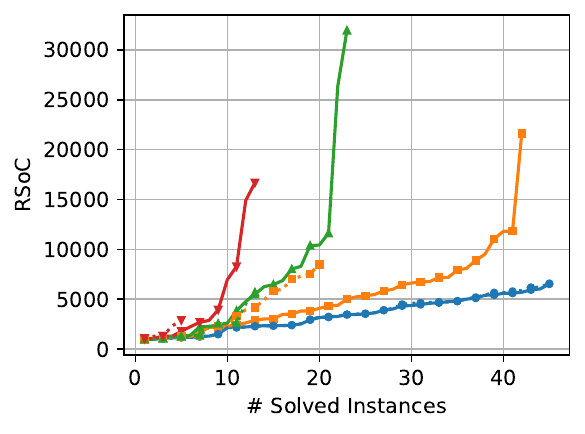}
  \caption{$lt\_gallowstemplar\_n$}
  \label{fig:fpp_fov__lt_gallowstemplar_n}
  \Description{Cactus chart of RSoC for each configuration of FoV in each sub-solver (LaCAM/PIBT) over \# of solved instances for the map lt_gallowstemplar_n.}
\end{subfigure}\hspace{1em}
\begin{subfigure}[t]{0.23\textwidth}
\centering
  \includegraphics[width=1\linewidth]{plots/maze-32-32-2/fov}
  \caption{$maze-32-32-2$}
  \label{fig:fpp_fov__maze-32-32-2}
  \Description{Cactus chart of RSoC for each configuration of FoV in each sub-solver (LaCAM/PIBT) over \# of solved instances for the map maze-32-32-2.}
\end{subfigure}\hspace{1em}
\begin{subfigure}[t]{0.23\textwidth}
\centering
  \includegraphics[width=1\linewidth]{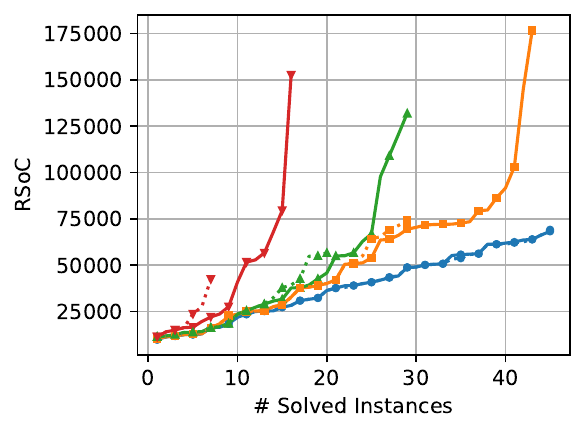}
  \caption{$orz900d$}
  \label{fig:fpp_fov__orz900d}
  \Description{Cactus chart of RSoC for each configuration of FoV in each sub-solver (LaCAM/PIBT) over \# of solved instances for the map orz900d.}
\end{subfigure}\hspace{1em}
\begin{subfigure}[t]{0.23\textwidth}
\centering
  \includegraphics[width=1\linewidth]{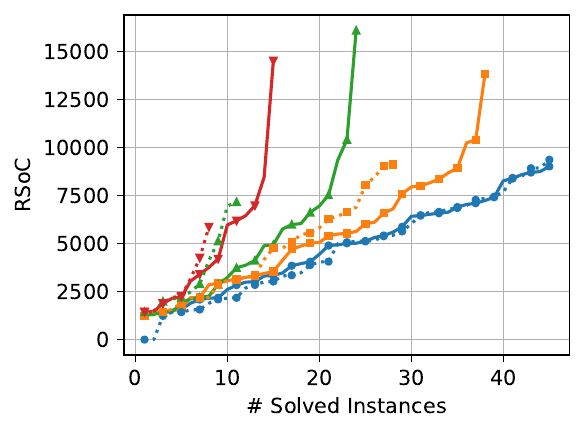}
  \caption{$ost003d$}
  \label{fig:fpp_fov__ost003d}
  \Description{Cactus chart of RSoC for each configuration of FoV in each sub-solver (LaCAM/PIBT) over \# of solved instances for the map ost003d.}
\end{subfigure}\hspace{1em}
\begin{subfigure}[t]{0.23\textwidth}
\centering
  \includegraphics[width=1\linewidth]{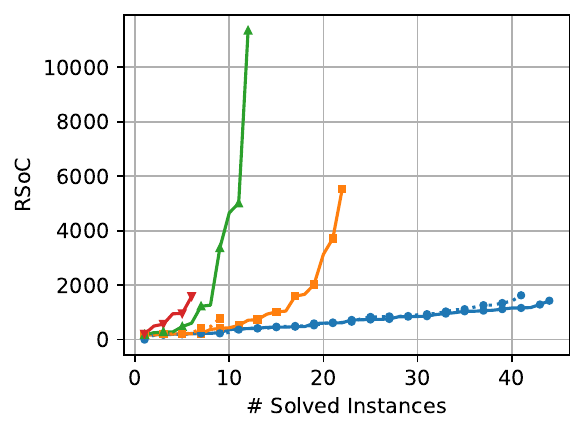}
  \caption{$random-32-32-20$}
  \label{fig:fpp_fov__random-32-32-20}
  \Description{Cactus chart of RSoC for each configuration of FoV in each sub-solver (LaCAM/PIBT) over \# of solved instances for the map random-32-32-20.}
\end{subfigure}\hspace{1em}
\begin{subfigure}[t]{0.23\textwidth}
\centering
  \includegraphics[width=1\linewidth]{plots/random-64-64-20/fov}
  \caption{$random-64-64-20$}
  \label{fig:fpp_fov__random-64-64-20}
  \Description{Cactus chart of RSoC for each configuration of FoV in each sub-solver (LaCAM/PIBT) over \# of solved instances for the map random-64-64-20.}
\end{subfigure}\hspace{1em}
\begin{subfigure}[t]{0.23\textwidth}
\centering
  \includegraphics[width=1\linewidth]{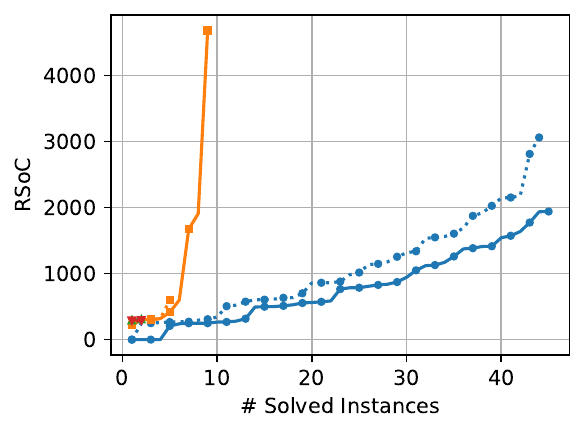}
  \caption{$room-32-32-4$}
  \label{fig:fpp_fov__room-32-32-4}
  \Description{Cactus chart of RSoC for each configuration of FoV in each sub-solver (LaCAM/PIBT) over \# of solved instances for the map room-32-32-4.}
\end{subfigure}\hspace{1em}
\begin{subfigure}[t]{0.23\textwidth}
\centering
  \includegraphics[width=1\linewidth]{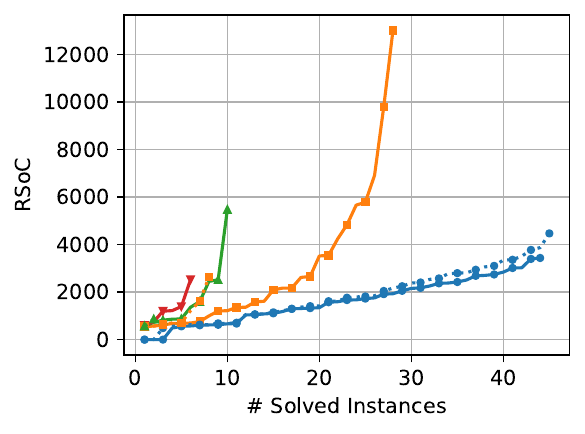}
  \caption{$room-64-64-8$}
  \label{fig:fpp_fov__room-64-64-8}
  \Description{Cactus chart of RSoC for each configuration of FoV in each sub-solver (LaCAM/PIBT) over \# of solved instances for the map room-64-64-8.}
\end{subfigure}\hspace{1em}
\begin{subfigure}[t]{0.23\textwidth}
\centering
  \includegraphics[width=1\linewidth]{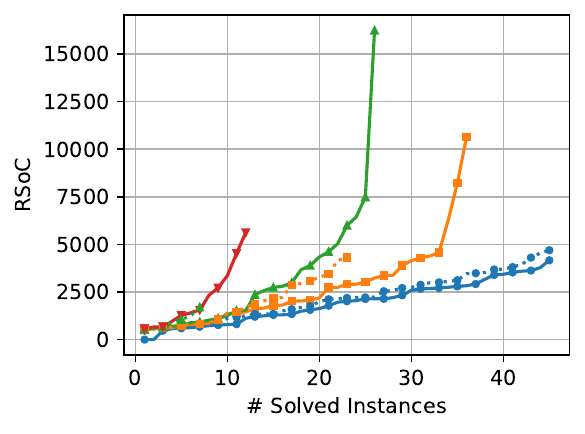}
  \caption{$room-64-64-16$}
  \label{fig:fpp_fov__room-64-64-16}
  \Description{Cactus chart of RSoC for each configuration of FoV in each sub-solver (LaCAM/PIBT) over \# of solved instances for the map room-64-64-16.}
\end{subfigure}\hspace{1em}
\begin{subfigure}[t]{0.23\textwidth}
\centering
  \includegraphics[width=1\linewidth]{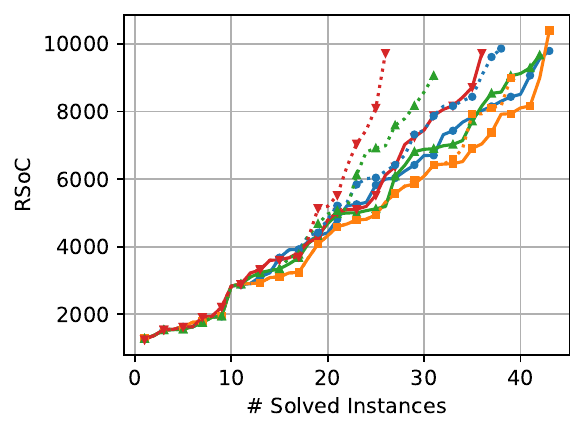}
  \caption{$warehouse-20-40-10-2-1$}
  \label{fig:fpp_fov__warehouse-20-40-10-2-1}
  \Description{Cactus chart of RSoC for each configuration of FoV in each sub-solver (LaCAM/PIBT) over \# of solved instances for the map warehouse-20-40-10-2-1.}
\end{subfigure}\hspace{1em}
\begin{subfigure}[t]{0.23\textwidth}
\centering
  \includegraphics[width=1\linewidth]{plots/warehouse-20-40-10-2-2/fov}
  \caption{$warehouse-20-40-10-2-2$}
  \label{fig:fpp_fov__warehouse-20-40-10-2-2}
  \Description{Cactus chart of RSoC for each configuration of FoV in each sub-solver (LaCAM/PIBT) over \# of solved instances for the map warehouse-20-40-10-2-2.}
\end{subfigure}\hspace{1em}
\begin{minipage}{\textwidth}
    \centering
    \includegraphics[width=0.9\linewidth]{plots/fov_legend}
    \label{fig:fpp_fov__legend}
    \Description{Legend}
\end{minipage}
\caption{Cactus chart of $RSoC$ for each configuration of $FoV$ in each sub-solver $(LaCAM/PIBT)$ over \# of solved instances.}
\label{fig:fpp_fov}
\end{figure*}
}

\plan{experiments for \ppfpp over the entire maps and sub solvers}


\begin{table}[ht]
\caption[\ppfpp improves \fpp RSoC]{Average \emph{RSoC improvement\%} $\left(\frac{RSoC(fPP)-RSoC(PPfPP)}{RSoC(fPP)}\right)*100\%$ (mean $\pm$ std, max and median (med)) of \ppfpp over \fpp, across domains with different k values. \#S is the amount of solved instances by \fpp (all instances also were also solved by \ppfpp). * and ** marks significant improvement of \ppfpp over \fpp (Wilcoxon, p $<$ 0.05 and p $<$ 0.01 accordingly). \fpp MS is the makespan of the original plan created by \fpp. Each map shows the amount of vertices in it $(|V|)$.}
\label{tab:ppfpp-rsoc}
\begin{tabular}{llllllll}
\toprule
Map & k & \multicolumn{3}{c}{$RSoC \ Improvement\%$} & \#S & time & \fpp \\
 &  & mean $\pm$ std & max & med &  & [s] & MS \\
\midrule
\multirow[c]{9}{*}{\rotatebox{90}{\shortstack{brc202d\\$|V|=43{,}151$}}} & $2$ & $0.77 \pm 2.47^{**}$ & $11.50$ & $0.02$ & $56$ & $113.6$ & $943.9$ \\
 & $3$ & $0.64 \pm 1.65^{**}$ & $7.25$ & $0.02$ & $55$ & $116.0$ & $975.4$ \\
 & $4$ & $1.04 \pm 2.69^{**}$ & $11.30$ & $0.02$ & $47$ & $121.2$ & $1022.4$ \\
 & $5$ & $1.06 \pm 2.66^{**}$ & $11.27$ & $0.00$ & $40$ & $125.3$ & $1041.3$ \\
 & $6$ & $1.20 \pm 2.68^{**}$ & $10.72$ & $0.02$ & $39$ & $124.0$ & $1040.9$ \\
 & $7$ & $1.80 \pm 3.90^{**}$ & $12.90$ & $0.03$ & $42$ & $131.0$ & $1094.5$ \\
 & $8$ & $2.04 \pm 4.45^{**}$ & $16.91$ & $0.07$ & $30$ & $137.6$ & $1170.1$ \\
 & $9$ & $3.21 \pm 6.40^{**}$ & $22.83$ & $0.11$ & $25$ & $143.4$ & $1203.6$ \\
 & $10$ & $1.01 \pm 2.69^{**}$ & $13.21$ & $0.07$ & $25$ & $143.8$ & $1312.6$ \\
\cline{1-8}
\multirow[c]{9}{*}{\rotatebox{90}{\shortstack{$\text{random-64-64-20}$\\$|V|=3{,}270$}}} & $2$ & $0.20 \pm 1.05^{**}$ & $7.75$ & $0.00$ & $56$ & $0.3$ & $90.5$ \\
 & $3$ & $0.62 \pm 1.89^{**}$ & $10.34$ & $0.00$ & $48$ & $0.3$ & $96.7$ \\
 & $4$ & $0.73 \pm 1.84^{**}$ & $7.65$ & $0.00$ & $38$ & $0.3$ & $100.3$ \\
 & $5$ & $1.33 \pm 2.60^{**}$ & $10.00$ & $0.23$ & $29$ & $0.4$ & $106.8$ \\
 & $6$ & $1.04 \pm 2.30^{**}$ & $8.33$ & $0.20$ & $21$ & $0.4$ & $120.4$ \\
 & $7$ & $0.70 \pm 1.24^{**}$ & $4.85$ & $0.19$ & $19$ & $0.4$ & $146.4$ \\
 & $8$ & $0.93 \pm 1.88^{**}$ & $5.72$ & $0.23$ & $15$ & $0.5$ & $151.8$ \\
 & $9$ & $2.97 \pm 5.36^{**}$ & $18.32$ & $0.16$ & $12$ & $0.7$ & $244.6$ \\
 & $10$ & $0.86 \pm 1.55^*$ & $3.64$ & $0.16$ & $5$ & $0.9$ & $294.6$ \\
\cline{1-8}
\multirow[c]{9}{*}{\rotatebox{90}{$\text{Mean}$}} & $2$ & $0.50 \pm 1.61^{**}$ & $11.50$ & $0.00$ & $518$ & $19.5$ & $293.3$ \\
 & $3$ & $0.68 \pm 2.22^{**}$ & $25.74$ & $0.00$ & $440$ & $22.3$ & $336.9$ \\
 & $4$ & $0.49 \pm 1.55^{**}$ & $11.30$ & $0.00$ & $360$ & $24.6$ & $374.7$ \\
 & $5$ & $0.93 \pm 2.35^{**}$ & $19.11$ & $0.02$ & $289$ & $26.4$ & $408.3$ \\
 & $6$ & $0.83 \pm 2.25^{**}$ & $17.06$ & $0.00$ & $244$ & $30.2$ & $451.8$ \\
 & $7$ & $1.04 \pm 2.86^{**}$ & $18.33$ & $0.03$ & $202$ & $37.7$ & $507.6$ \\
 & $8$ & $1.06 \pm 2.78^{**}$ & $16.91$ & $0.04$ & $176$ & $36.2$ & $518.6$ \\
 & $9$ & $1.66 \pm 3.90^{**}$ & $22.83$ & $0.05$ & $137$ & $38.6$ & $548.7$ \\
 & $10$ & $0.71 \pm 1.93^{**}$ & $13.21$ & $0.05$ & $119$ & $41.5$ & $630.2$ \\
\cline{1-8}
\bottomrule
\end{tabular}
\end{table}

In the second set of experiments, we aimed to explore the impact of higher values of $k$. Thus, we limited the range of values for the other experiment parameters, namely we ran \fpp on all maps with $k \in [2, 10]$, $N=10$, $r=1$, and using PIBT and LaCAM$^*$ as sub-solvers. 
Every experiment was performed 30 times for each configuration of parameters with different random seeds. 
For each instance solved by \fpp under 1 minute, we ran the \ppfpp algorithm with a timeout of 5 minutes. We ran this entire experiment on the cluster, with each map running in a different node. Each node ran with 40 CPUs for \fpp and 50 CPUs for \ppfpp and 60GB of RAM, in parallel, with 8 processes running different instances and 10 threads running in each process for calculating the \emph{ES} in parallel over different time steps.
The maps \emph{orz900d} and \emph{maze-32-32-2} are not shown in the results because of different reasons.
\emph{orz900d} does not appear since it is a very large map ($|V|=96,603$) and when running it in \ppfpp in our implementation it generates an \emph{out of memory} exception on the node for many problems.
\emph{maze-32-32-2} does not appear since no instance of it was solved by \fpp when using $r>0$.

Table~\ref{tab:ppfpp-rsoc} shows the results of the experiment for two interesting maps \emph{brc202d}, \emph{random-64-64-20} and the mean over all maps. The table contains results for both \mapf algorithms PIBT and LaCAM$^*$ together since there were similar trends for each of them separately. A table like this is shown in the supplementary material for the rest of the maps.
In almost all $k$ value for each map, the RSoC improvement is statistically significant.
In some cases, such as in $k=9$ in both maps, the maximal RSoC improvement was very high ($22.83\%$ for \emph{brc202d} and $18.32\%$ for \emph{random-64-64-20}).
As the $k$ value grows, the mean improvement grows as well, as can be seen in both maps, and in the mean of all maps, notice that the median value also grows with k, although it is very small. This consists with our hypothesis where larger $k$ values will result in larger RSoC improvement.
The runtime of \ppfpp depends directly on the map size, where larger maps take longer to generate the \emph{ES}. This can be seen for example in map \emph{brc202d} where $|V|=43,151$, the runtime is extremely high, in comparison to map \emph{random-64-64-20} where $|V|=3,270$, and the runtime there is below one second for all k values.
The runtime also depends on the makespan of the original plan generated by \fpp, where the higher the makespan, the higher the runtime of \ppfpp. This is because we need to calculate ES for each time step in the original plan, and go over all of the vertices. It can be seen in both maps and in the mean over all maps.

\section{Related Work}

\plan{Related work on privacy preserving planning}

In the field of Collaborative Privacy Preserving Planning (\cppp) there has been a lot of work regarding privacy in multi-agent planning scenarios \cite{maliah2017collaborative,maliah2016stronger,lehman2021partial,lev2022reducing,nissim2014distributed,brafman2015privacy}. It is a similar field, but privacy preserving \mapf is a special case of the \cppp general field. It is important to study privacy in \mapf as well, since it allows us to handle the \mapf case with better suited algorithms than the general purpose algorithms used in \cppp. In the \cppp field the privacy is measured by not disclosing the variables and actions of the world of each agent to the other agents (or disclose part of them keeping the other part private) - in privacy preserving \mapf --- which we properly define in this paper for the first time, the actions are known to all agents (move to any neighbor or stay in place). However, the vertices that each agent is in at any given time is a private knowledge that needs to be kept from other agents while not conflicting with other agents in the same time - which is a hard task to achieve. Brafman~\cite{brafman2015privacy} suggests a strong type of privacy where a value of a variable is strongly private if the other agents cannot deduce its existence from the information available to them.
In this paper we suggest a privacy preserving \mapf problem where we do not seek a form of fully strong privacy on the path of each agent, but we seek that the other agents cannot deduce the location of the agent in a specific time step exactly but from within a minimum of k possible values.

\plan{Related work on privacy in DCOP}
\balance
Another well-studied privacy preserving field is the privacy preserving distributed constraint optimization problem (DCOP) \cite{kogan2022privacy,grinshpoun2016p,greenstadt2007ssdpop,faltings2008privacy}, where the goal is to solve distributed combinatorial problems in which the variables of the problem are owned by different agents while keeping privacy over the constraints and variables of the different agents. 
In this paper, we focus on the privacy preserving \mapf problem we suggested, where, unlike privacy preserving DCOP, the goal is to plan paths for a set of agents, instead of optimizing some utility function.

\plan{Related work on the paper "Finding all optimal solutions in MAPF" (Dor Atzmon)}
Privacy has rarely been mentioned in the context of \mapf. 
One exception is recent work on finding all optimal solutions in \mapf \cite{bardugo2025finding}. They state that finding all solutions can help keep privacy about the chosen paths. However, the privacy considered there is with respect to an external agent, and not of hiding the paths of agents from each other as we do in this work.

\section{Conclusion and Future Work}

\plan{summarize the novel problems and approaches stated in this paper}

This paper introduces a new field in \mapf where privacy is considered.
We defined two novel privacy preserving problems, the \kppmapf which is a problem where agents must find paths that do not collide, without sharing their real path during planning, and the \ekppmapf which is an extension of the \kppmapf with the addition of also avoiding the paths of agents to be disclosed during the execution stage when there are limited sensors on the agents by keeping out of the field of view of other agents during planning.
We also provide two novel algorithms to solve the new problems, \kpp, which solves the \kppmapf problem by introducing \emph{mock agents} to confuse other agents during planning and the \fpp, which extends the \kpp in order to solve the \ekppmapf problem by also adapting the sub-solver used to keep out of the field of view of agents in different groups.

\plan{say about the \ppfpp algorithm that improves RSoC}

We also introduce the \ppfpp algorithm, which improves the cost of plans generated for the \ekppmapf problem by defining a concept of \emph{safe-zones} where each agent group can walk freely in and re-plan for the single real agent inside of it.

\plan{extensive theory and experiments done}

We also provide extensive theoretical and experimental evaluation of our algorithms, showing they preserve k-privacy and Runtime k-privacy and that the \ppfpp algorithm improves the cost of the \ekppmapf plans resulted by \fpp.

\plan{Future work in \ppfpp}

In \ppfpp, one can further research more complicated functions other than the random function suggested in algorithm~\ref{alg:extend-safe-zone-random}, that follows the rules from definition~\ref{def:rules-for-extended-safe-zone}, for improving the extension of safe zones, and maybe improve the $RSoC(\Pi)$ even further.

Another future research direction is to explore other sub-planners to use in the \fpp algorithm, such as optimal \mapf planners, and see whether the \ppfpp improves their cost significantly, and makes it closer to the optimal RSoC.

\plan{Add swap to pibt to improve performance}
In our work we used a vanilla version of PIBT which does not use the \emph{swap} improvement introduced at \cite{okumura2023lacam2}. A future research direction could be to support \emph{FoV} in the \emph{swap} operation of PIBT to improve the planning time of \fpp while using LaCAM$^*$, and solving more instances while using PIBT.

\commentout{
\begin{figure*}[ht]
    \includegraphics[width=\linewidth]{ppfpp_results/ppfpp.pdf}
    \caption{A heatmap for each of the following metrics: \emph{RSoC improvement \%} $\left(\frac{RSoC(fPP)-RSoC(PPfPP)}{RSoC(fPP)}\right)*100\%$, \emph{\ppfpp runtime $[s]$} and \emph{\fpp makespan}. In the \emph{RSoC improvement \%} heatmap, * and ** marks significant improvement of \ppfpp over \fpp (Wilcoxon, p $<$ 0.05 and p $<$ 0.01 accordingly). Each heatmap's y axis is the map we run on and the x axis is the k value we run on. Each map shows the size of the map in vertices ($|V|$). A mean over all maps is also shown. The results here are for both sub-solvers PIBT and LaCAM$^*$ together.
    }
    \label{fig:ppfpp_results}
    \Description{Results for the \ppfpp algorithm improving the \fpp output in terms of RSoC. Presenting 3 heatmaps 1 for each metric \emph{RSoC improvement \%}, \emph{\ppfpp runtime $[s]$} and \emph{\fpp makespan}.}
\end{figure*}
}

\commentout{
\begin{table}[ht]
\caption[\ppfpp improves \fpp RSoC]{Average \emph{RSoC improvement\%} $\left(\frac{RSoC(fPP)-RSoC(PPfPP)}{RSoC(fPP)}\right)*100\%$ (mean $\pm$ std, max and median (med)) of \ppfpp over \fpp, across domains with different k values. \#S is the amount of solved instances by \fpp (all instances also were also solved by \ppfpp). * and ** marks significant improvement of \ppfpp over \fpp (Wilcoxon, p $<$ 0.05 and p $<$ 0.01 accordingly). \fpp MS is the makespan of the original plan created by \fpp. Each map shows the amount of vertices in it $(|V|)$.}
\label{tab:ppfpp-rsoc}
\begin{tabular}{llllllll}
\toprule
Map & k & \multicolumn{3}{c}{$RSoC \ Improvement\%$} & \#S & time & \fpp \\
 &  & mean $\pm$ std & max & med &  & [s] & MS \\
\midrule
\multirow[c]{9}{*}{\rotatebox{90}{\shortstack{brc202d\\$|V|=43{,}151$}}} & $2$ & $0.77 \pm 2.47^{**}$ & $11.50$ & $0.02$ & $56$ & $113.6$ & $943.9$ \\
 & $3$ & $0.64 \pm 1.65^{**}$ & $7.25$ & $0.02$ & $55$ & $116.0$ & $975.4$ \\
 & $4$ & $1.04 \pm 2.69^{**}$ & $11.30$ & $0.02$ & $47$ & $121.2$ & $1022.4$ \\
 & $5$ & $1.06 \pm 2.66^{**}$ & $11.27$ & $0.00$ & $40$ & $125.3$ & $1041.3$ \\
 & $6$ & $1.20 \pm 2.68^{**}$ & $10.72$ & $0.02$ & $39$ & $124.0$ & $1040.9$ \\
 & $7$ & $1.80 \pm 3.90^{**}$ & $12.90$ & $0.03$ & $42$ & $131.0$ & $1094.5$ \\
 & $8$ & $2.04 \pm 4.45^{**}$ & $16.91$ & $0.07$ & $30$ & $137.6$ & $1170.1$ \\
 & $9$ & $3.21 \pm 6.40^{**}$ & $22.83$ & $0.11$ & $25$ & $143.4$ & $1203.6$ \\
 & $10$ & $1.01 \pm 2.69^{**}$ & $13.21$ & $0.07$ & $25$ & $143.8$ & $1312.6$ \\
\cline{1-8}
\multirow[c]{9}{*}{\rotatebox{90}{\shortstack{$\text{lt\_gallowstemplar\_n}$\\$|V|=10{,}021$}}} & $2$ & $0.64 \pm 1.66^{**}$ & $8.23$ & $0.00$ & $59$ & $3.2$ & $245.2$ \\
 & $3$ & $0.88 \pm 1.77^{**}$ & $8.02$ & $0.14$ & $50$ & $3.2$ & $270.3$ \\
 & $4$ & $0.24 \pm 0.84^{**}$ & $5.60$ & $0.07$ & $45$ & $3.3$ & $284.4$ \\
 & $5$ & $0.87 \pm 1.65^{**}$ & $5.14$ & $0.14$ & $34$ & $3.7$ & $316.6$ \\
 & $6$ & $0.74 \pm 1.75^{**}$ & $7.20$ & $0.15$ & $24$ & $3.9$ & $348.0$ \\
 & $7$ & $0.70 \pm 2.51^{**}$ & $12.87$ & $0.09$ & $26$ & $3.8$ & $342.4$ \\
 & $8$ & $0.89 \pm 2.06^{**}$ & $9.54$ & $0.26$ & $21$ & $4.8$ & $421.0$ \\
 & $9$ & $2.06 \pm 3.43^{**}$ & $10.30$ & $0.36$ & $12$ & $5.0$ & $418.8$ \\
 & $10$ & $0.45 \pm 1.13^{**}$ & $4.03$ & $0.15$ & $12$ & $5.9$ & $508.2$ \\
\cline{1-8}
\multirow[c]{9}{*}{\rotatebox{90}{\shortstack{ost003d\\$|V|=13{,}214$}}} & $2$ & $0.59 \pm 1.84^{**}$ & $8.83$ & $0.00$ & $52$ & $7.2$ & $346.6$ \\
 & $3$ & $1.19 \pm 3.14^{**}$ & $13.15$ & $0.05$ & $49$ & $7.0$ & $363.9$ \\
 & $4$ & $0.68 \pm 1.88^{**}$ & $9.27$ & $0.04$ & $40$ & $6.9$ & $375.8$ \\
 & $5$ & $1.22 \pm 2.86^{**}$ & $10.38$ & $0.06$ & $30$ & $7.9$ & $410.8$ \\
 & $6$ & $2.13 \pm 4.02^{**}$ & $17.06$ & $0.15$ & $30$ & $7.5$ & $401.2$ \\
 & $7$ & $1.29 \pm 2.45^{**}$ & $6.89$ & $0.13$ & $18$ & $7.8$ & $427.9$ \\
 & $8$ & $2.20 \pm 4.54^{**}$ & $16.25$ & $0.15$ & $14$ & $9.7$ & $518.4$ \\
 & $9$ & $2.49 \pm 3.75^{**}$ & $10.33$ & $0.20$ & $14$ & $9.0$ & $461.8$ \\
 & $10$ & $0.10 \pm 0.06^{**}$ & $0.17$ & $0.11$ & $8$ & $12.9$ & $665.5$ \\
\cline{1-8}
\multirow[c]{5}{*}{\rotatebox{90}{\shortstack{$\text{random-32-32-20}$\\$|V|=819$}}} & $2$ & $0.64 \pm 1.38^{**}$ & $4.83$ & $0.00$ & $43$ & $0.0$ & $48.3$ \\
 & $3$ & $2.14 \pm 3.51^{**}$ & $10.66$ & $0.41$ & $30$ & $0.1$ & $61.5$ \\
 & $4$ & $0.78 \pm 0.84^{**}$ & $3.33$ & $0.50$ & $19$ & $0.1$ & $108.5$ \\
 & $5$ & $1.73 \pm 2.66^{**}$ & $8.38$ & $0.41$ & $11$ & $0.1$ & $192.6$ \\
 & $6$ & $0.14 \pm 0.13$ & $0.25$ & $0.16$ & $4$ & $0.2$ & $286.5$ \\
\cline{1-8}
\multirow[c]{9}{*}{\rotatebox{90}{\shortstack{$\text{random-64-64-20}$\\$|V|=3{,}270$}}} & $2$ & $0.20 \pm 1.05^{**}$ & $7.75$ & $0.00$ & $56$ & $0.3$ & $90.5$ \\
 & $3$ & $0.62 \pm 1.89^{**}$ & $10.34$ & $0.00$ & $48$ & $0.3$ & $96.7$ \\
 & $4$ & $0.73 \pm 1.84^{**}$ & $7.65$ & $0.00$ & $38$ & $0.3$ & $100.3$ \\
 & $5$ & $1.33 \pm 2.60^{**}$ & $10.00$ & $0.23$ & $29$ & $0.4$ & $106.8$ \\
 & $6$ & $1.04 \pm 2.30^{**}$ & $8.33$ & $0.20$ & $21$ & $0.4$ & $120.4$ \\
 & $7$ & $0.70 \pm 1.24^{**}$ & $4.85$ & $0.19$ & $19$ & $0.4$ & $146.4$ \\
 & $8$ & $0.93 \pm 1.88^{**}$ & $5.72$ & $0.23$ & $15$ & $0.5$ & $151.8$ \\
 & $9$ & $2.97 \pm 5.36^{**}$ & $18.32$ & $0.16$ & $12$ & $0.7$ & $244.6$ \\
 & $10$ & $0.86 \pm 1.55^*$ & $3.64$ & $0.16$ & $5$ & $0.9$ & $294.6$ \\
\cline{1-8}
\multirow[c]{2}{*}{\rotatebox{90}{\shortstack{$\text{room-32-32-4}$\\$|V|=682$}}} & $2$ & $1.17 \pm 1.61^{**}$ & $4.32$ & $0.29$ & $28$ & $0.0$ & $65.2$ \\
 & $3$ & $0.93 \pm 2.08^{**}$ & $6.84$ & $0.32$ & $10$ & $0.1$ & $123.9$ \\
\cline{1-8}
\multirow[c]{9}{*}{\rotatebox{90}{\shortstack{$\text{room-64-64-16}$\\$|V|=3{,}646$}}} & $2$ & $0.95 \pm 2.47^{**}$ & $11.21$ & $0.00$ & $56$ & $0.6$ & $164.0$ \\
 & $3$ & $1.06 \pm 3.73^{**}$ & $25.74$ & $0.14$ & $50$ & $0.5$ & $172.4$ \\
 & $4$ & $0.50 \pm 0.98^{**}$ & $3.51$ & $0.11$ & $36$ & $0.6$ & $180.5$ \\
 & $5$ & $1.80 \pm 3.98^{**}$ & $19.11$ & $0.19$ & $27$ & $0.7$ & $196.7$ \\
 & $6$ & $1.13 \pm 2.45^{**}$ & $11.09$ & $0.11$ & $22$ & $0.8$ & $242.2$ \\
 & $7$ & $2.76 \pm 5.64^{**}$ & $18.33$ & $0.24$ & $12$ & $0.9$ & $258.7$ \\
 & $8$ & $2.02 \pm 3.53^{**}$ & $9.80$ & $0.33$ & $11$ & $0.8$ & $239.5$ \\
 & $9$ & $0.08 \pm 0.11$ & $0.15$ & $0.08$ & $2$ & $1.1$ & $300.5$ \\
 & $10$ & $0.11 \pm 0.08$ & $0.21$ & $0.10$ & $4$ & $2.2$ & $635.5$ \\
\cline{1-8}
\multirow[c]{6}{*}{\rotatebox{90}{\shortstack{$\text{room-64-64-8}$\\$|V|=3{,}232$}}} & $2$ & $0.32 \pm 1.11^{**}$ & $6.93$ & $0.00$ & $51$ & $0.3$ & $125.6$ \\
 & $3$ & $0.13 \pm 0.14^{**}$ & $0.38$ & $0.11$ & $34$ & $0.4$ & $144.9$ \\
 & $4$ & $0.56 \pm 1.68^{**}$ & $8.02$ & $0.17$ & $22$ & $0.5$ & $182.4$ \\
 & $5$ & $0.66 \pm 1.29^{**}$ & $3.94$ & $0.13$ & $16$ & $0.6$ & $241.9$ \\
 & $6$ & $0.19 \pm 0.19^*$ & $0.45$ & $0.06$ & $5$ & $0.8$ & $350.2$ \\
 & $7$ & $0.11 \pm nan$ & $0.11$ & $0.11$ & $1$ & $0.5$ & $197.0$ \\
\cline{1-8}
\multirow[c]{9}{*}{\rotatebox{90}{\shortstack{$\text{warehouse-20-40-10-2-1}$\\$|V|=22{,}599$}}} & $2$ & $0.00 \pm 0.01^*$ & $0.08$ & $0.00$ & $59$ & $11.5$ & $341.2$ \\
 & $3$ & $0.00 \pm 0.01$ & $0.05$ & $0.00$ & $59$ & $11.1$ & $348.7$ \\
 & $4$ & $0.00 \pm 0.02^*$ & $0.08$ & $0.00$ & $56$ & $11.2$ & $361.3$ \\
 & $5$ & $0.26 \pm 1.25^{**}$ & $9.03$ & $0.00$ & $57$ & $11.4$ & $367.8$ \\
 & $6$ & $0.34 \pm 1.17^{**}$ & $4.91$ & $0.00$ & $55$ & $11.5$ & $374.7$ \\
 & $7$ & $0.67 \pm 2.16^{**}$ & $10.98$ & $0.00$ & $50$ & $11.9$ & $386.5$ \\
 & $8$ & $0.27 \pm 1.06^{**}$ & $4.99$ & $0.00$ & $50$ & $11.9$ & $384.4$ \\
 & $9$ & $0.99 \pm 2.39^{**}$ & $9.29$ & $0.00$ & $49$ & $12.7$ & $408.0$ \\
 & $10$ & $0.95 \pm 2.14^{**}$ & $9.29$ & $0.00$ & $50$ & $12.4$ & $404.5$ \\
\cline{1-8}
\multirow[c]{9}{*}{\rotatebox{90}{\shortstack{$\text{warehouse-20-40-10-2-2}$\\$|V|=38{,}756$}}} & $2$ & $0.05 \pm 0.22^*$ & $1.07$ & $0.00$ & $58$ & $41.7$ & $377.2$ \\
 & $3$ & $0.04 \pm 0.17^{**}$ & $1.11$ & $0.00$ & $55$ & $40.3$ & $389.1$ \\
 & $4$ & $0.32 \pm 1.27^{**}$ & $6.97$ & $0.00$ & $57$ & $36.2$ & $392.9$ \\
 & $5$ & $0.63 \pm 1.75^{**}$ & $8.65$ & $0.00$ & $45$ & $34.6$ & $397.6$ \\
 & $6$ & $0.19 \pm 0.60^{**}$ & $2.65$ & $0.00$ & $44$ & $35.1$ & $406.7$ \\
 & $7$ & $0.35 \pm 1.46^{**}$ & $8.31$ & $0.00$ & $34$ & $37.2$ & $428.3$ \\
 & $8$ & $0.75 \pm 1.73^{**}$ & $7.41$ & $0.00$ & $35$ & $39.7$ & $455.7$ \\
 & $9$ & $0.16 \pm 0.65^{**}$ & $3.11$ & $0.00$ & $23$ & $38.3$ & $437.3$ \\
 & $10$ & $0.02 \pm 0.03^*$ & $0.12$ & $0.00$ & $15$ & $35.9$ & $434.6$ \\
\cline{1-8}
\multirow[c]{9}{*}{\rotatebox{90}{$\text{Mean}$}} & $2$ & $0.50 \pm 1.61^{**}$ & $11.50$ & $0.00$ & $518$ & $19.5$ & $293.3$ \\
 & $3$ & $0.68 \pm 2.22^{**}$ & $25.74$ & $0.00$ & $440$ & $22.3$ & $336.9$ \\
 & $4$ & $0.49 \pm 1.55^{**}$ & $11.30$ & $0.00$ & $360$ & $24.6$ & $374.7$ \\
 & $5$ & $0.93 \pm 2.35^{**}$ & $19.11$ & $0.02$ & $289$ & $26.4$ & $408.3$ \\
 & $6$ & $0.83 \pm 2.25^{**}$ & $17.06$ & $0.00$ & $244$ & $30.2$ & $451.8$ \\
 & $7$ & $1.04 \pm 2.86^{**}$ & $18.33$ & $0.03$ & $202$ & $37.7$ & $507.6$ \\
 & $8$ & $1.06 \pm 2.78^{**}$ & $16.91$ & $0.04$ & $176$ & $36.2$ & $518.6$ \\
 & $9$ & $1.66 \pm 3.90^{**}$ & $22.83$ & $0.05$ & $137$ & $38.6$ & $548.7$ \\
 & $10$ & $0.71 \pm 1.93^{**}$ & $13.21$ & $0.05$ & $119$ & $41.5$ & $630.2$ \\
\cline{1-8}
\bottomrule
\end{tabular}
\end{table}
}

\bibliographystyle{ACM-Reference-Format} 
\bibliography{main}


\end{document}